\DeclareSymbolFont{matha}{OML}{txmi}{m}{it}
\DeclareMathSymbol{\varv}{\mathord}{matha}{118}
\newcommand{\trans}{^{\mathsf{T}}}
\newcommand{\herm}{^{\H}}
\begin{document}
	\title{STAR-RIS-Assisted Communication Radar Coexistence: Analysis and Optimization} 
		\author{Anastasios Papazafeiropoulos,  Pandelis Kourtessis, Symeon Chatzinotas \thanks{A. Papazafeiropoulos is with the Communications and Intelligent Systems Research Group, University of Hertfordshire, Hatfield AL10 9AB, U. K., and with SnT at the University of Luxembourg, Luxembourg.   P. Kourtessis is with the Communications and Intelligent Systems Research Group, University of Hertfordshire, Hatfield AL10 9AB, U. K.   S. Chatzinotas is with the SnT at the University of Luxembourg, Luxembourg. A. Papazafeiropoulos was supported  by the University of Hertfordshire's 5-year Vice Chancellor's Research Fellowship.  It has also received funding from the European Research Council (ERC) under the European Union’s Horizon 2020 research and innovation programme (Grant agreement No. 819819).
			S. Chatzinotas   was supported by the National Research Fund, Luxembourg, under the project RISOTTI. E-mails: tapapazaf@gmail.com, p.kourtessis@herts.ac.uk, symeon.chatzinotas@uni.lu.}}
	\maketitle\vspace{-1.7cm}
	\begin{abstract}
		Integrated sensing and communication (ISAC) is expected to play a prominent role among emerging technologies in future wireless communications. In particular,  a communication radar coexistence system is degraded significantly by mutual interference. In this work, given the advantages of promising reconfigurable intelligent surface (RIS), we propose  a simultaneously transmitting and reflecting RIS (STAR-RIS)-assisted radar coexistence system where a STAR-RIS is introduced to improve  the communication performance while suppressing the  mutual interference and providing full space coverage. Based on the realistic conditions of   correlated fading, and the presence of multiple user equipments (UEs) at both sides of the RIS, we derive the achievable rates at the radar and the communication receiver side in closed forms in terms of statistical channel state information (CSI). Next, we perform alternating optimization (AO) for optimizing the STAR-RIS and the radar beamforming. Regarding the former, we optimize the amplitudes and phase shifts of the STAR-RIS  through a projected gradient ascent algorithm (PGAM) simultaneously with respect to the amplitudes and phase shifts of the surface for both energy splitting (ES) and mode switching (MS)  operation protocols. The proposed optimization saves enough overhead since it can be performed every several coherence intervals. This property is  particularly beneficial compared to  reflecting-only RIS because a STAR-RIS includes the double number of variables, which   require increased overhead. Finally, simulation results illustrate how the proposed architecture outperforms the conventional RIS counterpart, and show how the various parameters affect the performance. \textcolor{black}{Moreover, a benchmark full instantaneous CSI  (I-CSI) based design is provided and shown to result in higher sum-rate but also in large overhead associated with complexity.}
	\end{abstract}
	\begin{keywords}
		Integrated sensing and communication, simultaneously transmitting and reflecting RIS, correlated Rayleigh fading, achievable rate, 6G networks.
	\end{keywords}
	
	\section{Introduction}
	The exponential growth of mobile devices and the gradual increase in the data rate demand are supported by the rapid development of future wireless communication networks \cite{Tariq2020}. The required improvement in wireless transmission capacity can be achieved in two ways. The first key way concerns the move to a higher spectrum such as millimeter wave or even terahertz transmission \cite{Wang2018a,Elayan2018}. The second promising direction includes sharing of the frequency bandwidth among different systems. In parallel, intelligent wireless communications including artificial intelligence (AI) require massive sensing data for the training of the AI network. Note that the acquirement of such data can be achieved by radar systems. To cover these gaps, integrated sensing and communication (ISAC) has emerged as a promising solution that can improve wireless capacity and collect sensing data simultaneously \cite{Luong2021,Liu2022}. 
	
	Generally, there are two directions with respect to ISAC implementation. One is the dual-functional radar and communication (DFRC) system design \cite{Liu2018,Hassanien2019,Liu2020a}, where the communication and radar devices share the same hardware. The other includes a communication radar coexistence system \cite{Zheng2019a}, where the communication and radar devices are separated but share the same frequency bandwidth. In this work, we rely on the latter implementation with the relevant works focusing on the suppression of the mutual interference between the two systems by means of suitable communication beamforming and radar codesign. Notably, the mutual interference can be mitigated by the introduction of a disruptive simultaneously transmitting and reflecting reconfigurable intelligent surfaces (STAR-RIS) technology in three ways as will be proposed below, communication signal enhancement and mutual interference suppression while providing $ 360^{\circ} $ coverage. Note that it has been shown that a RIS not only can play an important role in spectrum sharing in cognitive systems \cite{Yuan2021} but a conventional RIS has been already employed in communication radar coexistence system \cite{He2022}.
	
	Generally, A RIS consists of a metasurface that includes a large number of passive elements which can be reconfigured independently by inducing suitable phases on the impinging waves. This technology has been introduced as a cost- and energy-efficient solution to improve the performance in wireless systems \cite{Wu2019,Basar2019}. An accompanied controller is responsible for the management or the phase shifts towards the coherent addition of the reflected signals that will result in boosting the desired signal at the receiver. 
	
	Although the main assumption in most existing works is that both the transmitter and the receiver are located on the same side of the surface with only reflection taking place, in practice, user equipments (UEs) can be located not only in front but also behind the surface. Fortunately, recent advancements in metamaterials have enabled STAR-RIS that provides full space coverage by adjusting both the amplitudes and phase shifts of the impinging waves \cite{Xu2021,Mu2021,Niu2021,Wu2021,Niu2022,Papazafeiropoulos2023,Papazafeiropoulos2023a}. Especially, in \cite{Mu2021}, authors suggested the operating protocols for modifying the transmission and reflection coefficients of the transmitting and reflected signals which are namely energy splitting (ES), mode switching (MS), and time switching (TS). \textcolor{black}{Note that the performance analysis of the ES/MS and TS protocols require different lines of analysis. For example, in \cite{Papazafeiropoulos2023}, the downlink achievable rate and its optimization of a STAR-RIS-assisted mMIMO system was obtained, while in \cite{Papazafeiropoulos2023a}, the max-min SINR of  a STAR-RIS-assisted mMIMO system with hardware impairments was studied for the ES and MS protocols.}
	
	To address the interference between communication and radar systems, sophisticated optimization techniques by means of joint spectrum allocation \cite{Wang2019} and beamforming design have been studied \cite{Liu2018a}. Moreover, a number of works have considered the introduction of RIS into ISAC \cite{Wang2021c,Sankar2021,Jiang2021,Wang2020a,He2022}. In \cite{Wang2021c}, a RIS-assisted MIMO DFRC system was considered to reduce the mutual reference by deploying the RIS close to the communication devices by performing alternating optimization (AO)-based method. In \cite{Sankar2021}, a RIS was adaptively partitioned in two parts for communication and localization respectively by means of a large localization algorithm and a RIS passive beamforming (PB) algorithm. In \cite{Jiang2021}, a RIS was employed for both communication and sensing by maximising the signal-to-interference-plus-noise ratio (SINR) constraint under a communication constraint while developing an AO approach for the corresponding problem. In \cite{Wang2020a}, authors focused on a coexistence system and considered the spectrum sharing problem between MIMO and multi-user RIS-assisted communication systems,  while in \cite{He2022}, a double RIS considered the performance improvements with a single receiver and a single-antenna transmitter.
	
	In general, there are two methods for the optimization of the phase shifts. Specifically, the first approach relies on the instantaneous channel state information  CSI  (I-CSI) \cite{Wu2019,Pan2020}, and the second approach relies on statistical CSI \cite{Zhao2020,Kammoun2020,Papazafeiropoulos2021,VanChien2021,Papazafeiropoulos2021a}. The former approach assumes that the phase shifts are optimized at every coherence interval because the corresponding analytical results depend on small-scale fading. On the other hand, the latter approach includes expressions in terms of large-scale fading statistics that vary at every several coherence intervals. Thus, the corresponding optimization can be performed less frequently, which reduces the signal overhead and induces lower complexity. Especially, when the number of RIS elements becomes very large or in high-speed scenarios with fast time-varying channels, the first method becomes prohibitive. In addition, in the case of the first approach, the continuous overloading of the smart controller is not energy efficient. \textcolor{black}{Note that the second approach
	is basically based on  the  two-timescale transmission protocol approach  \cite{Zhao2020}. According to this protocol, the precoding is designed in terms of I-CSI, while the RIS   phase shifts are optimised by using statistical CSI. Other examples of works, which are based on the two-timescale protocol are the study of the impact of hardware impairments on the sum rate and the minimum rate in  \cite{Papazafeiropoulos2021} and \cite{Kammoun2020}, respectively.}
	
	In this work, we consider a typical scenario, where a communication multi-user multi-input single-output (MISO) architecture shares the same frequency with a radar. Unlike previous works that have considered conventional reflecting-only RIS, we propose a STAR-RIS-assisted radar coexistence system with correlated fading, where the surface aims to suppress the interference from the transmitter to the radar and cancel the interference from the radar to the UEs while using the two-timescale protocol \cite{Zhao2020,Papazafeiropoulos2021}.\footnote{ In \cite{Bjoernson2020}, it was shown that the RIS correlation should be taken into account because it is unavoidable in practice.} \textcolor{black}{In parallel to this work, we noticed \cite{Wang2022,Wang2023},  which considered the coexistence of  STAR-RIS with a radar. \textcolor{black}{However, our work differentiates in a number of ways. Specifically,our work is the only one that has studied the coexistence of the STAR-RIS with a radar with statistical CSI by using the two-timescale protocol. In \cite{Wang2022,Wang2023}, only I-CSI was considered. Hence, in this work, the optimization can take place at every several coherence intervals, not at each coherence interval. Also, the system models and the performance metric are different.  Contrary to  \cite{Wang2022,Wang2023}, we have assumed realistic correlated Rayleigh fading. Spatial correlation makes  the formulation more complicated since it concerns manipulations with matrices instead of scalars (path-losses) used in many other works. In the optimization, the correlation matrices induce special manipulations to compute the gradient ascent. Moreover, we have shown its importance to perform phase shifts optimization. Also, the introduction of spatial correlation increases the computational complexity of the optimization since it includes matrix-wise operations instead of scalar operations.\footnote{\textcolor{black}{ In \cite{Wang2022,Wang2023}, the whole space is divided into two half-spaces, namely the sensing space and the communication space. In our case,  no space division takes place. Both areas of the STAR-RIS serve all communication and sensing users. In \cite{Wang2022,Wang2023}, one target is assumed. We assume that each user can be potential a target and a communication user. In \cite{Wang2022,Wang2023}, low-cost sensors on STAR- RIS were installed.  In our work, we have have not applied such installation.    In \cite{Wang2022,Wang2023}, the Cramer-Rao bound (CRB) the two-dimensional (2D) direction-of-arrivals (DOAs) estimation of the sensing target was derived, while we have derived the sum SE and performed its optimization.}} In addition, note that many previous works have assumed that the channel between the BS and the RIS is deterministic expressing a line-of-sight (LoS) component, e.g,. [26], while the analysis here is more general since we assume that all links are correlated Rayleigh fading distributed. We have provided an analytical framework to present in an elegant unified way the analysis of both types of UEs located in front of and the behind the surface. Furthermore, we have considered multiples UEs at each side of the surface, while most works on STAR-RIS have assumed only one UE at each side. Moreover, we have managed not only to provide the gradient ascent in closed form but we have also applied it simultaneously on the amplitudes and the phase shifts, while other works optimize only the phase shifts or optimize both the amplitudes and the phase shifts in an  alternating optimization manner.	}	Based on AO, we aim to maximise the communication performance in terms of the passive beamforming at the STAR-RIS and the radar beamforming, while having a constraint regarding the radar transmit power that guarantees the radar detection performance.}  The main contribution are summarised as follows:
	
	\begin{itemize}
		\item To suppress the mutual interference while providing $ 360^{\circ} $ coverage, we propose a novel STAR-RIS-assisted  radar coexistence system with correlated fading at both the BS and the STAR-RIS, where multiple UEs are located at each side of the surface based on the  two-timescale protocol.
		
		\item We derive the achievable spectral efficiencies (SEs) with correlated fading at the radar and the desired UE located at the transmit or the reflection side of the surface in closed forms that depend only on large-scale statistics.
		\item Based on statistical CSI, where the analytical results depend only on large-scale statistics that change at every several coherence intervals, we formulate the optimization problem to maximise the achievable SE, while avoiding interference from the radar. Also, we perform simultaneous optimization of the amplitudes and the phase shifts of the STAR-RIS, which significantly reduces the overhead. This property is quite important for STAR-RIS applications that have twice the number of optimization variables compared to conventional RIS.
		\item Simulation results are provided to validate the analytical results and shed light on the impact of various parameters. Also, the superiority of the proposed architecture against the conventional counterpart is shown. In addition, we consider as a performance benchmark a full  I-CSI based RIS design, and we depict the superiority of the proposed framework based on S-CSI.
	\end{itemize}
	
	\textit{Paper Outline}: The remainder of this paper is organized as follows. Section~\ref{System} presents the system model of a STAR-RIS-assisted  radar coexistence system with correlated Rayleigh fading. Section~\ref{PerformanceAnalysis} presents the downlink data transmission. Section \ref{SumSEMaximizationDesign} provides the simultaneous optimization of the  amplitudes and  phase-shifts configuration concerning both  the PBs for the  transmission and reflection regions as well as the optimization of the radar beamforming design. The numerical results are placed in Section~\ref{Numerical}, and Section~\ref{Conclusion} concludes the paper.
	
	\textit{Notation}: Vectors and matrices are denoted by boldface lower and upper case symbols, respectively. The notations $(\cdot)^\T$, $(\cdot)^\H$, and $\tr\!\left( {\cdot} \right)$ describe the transpose, Hermitian transpose, and trace operators, respectively. Moreover, the notation  $\EE\{\cdot\}$ expresses   the expectation operator. The notation  $\diag\left(\bA\right) $ describes a vector with elements equal to the  diagonal elements of $ \bA $, the notation  $\diag\left(\bx\right) $ describes a diagonal  matrix whose elements are $ \bx $, while  $\bb \sim \cC\cN{(\b0,\mathbf{\Sigma})}$ describes a circularly symmetric complex Gaussian vector with zero mean and a  covariance matrix $\mathbf{\Sigma}$. 
	
	\section{System Model}\label{System}
	In this section, we present the model of the STAR-RIS-assisted communication radar coexistence system, as shown in Fig. \ref{Fig1}. Specifically, we consider a BS with an $ M $-element uniform linear array (ULA) that serves simultaneously $ K $ single-antenna UEs, which are located on both sides of a STAR-RIS with $ N  $ elements,  belonging to the set  $\mathcal{N}=\{1, \ldots, N \}$. We assume that 	$\mathcal{K}_{t}=\{1,\ldots,K_{t} \} $ UEs are distributed  in the transmission region $ (t) $ and $ \mathcal{K}_{r}=\{1,\ldots,K_{r} \} $ UEs are  distributed in the reflection region $ (r) $ of the STAR-RIS, respectively, i.e., $ K_{t}+K_{r}=K $. To define the RIS operation mode for each of the $ K $ UEs, we denote the set $ \mathcal{W} = \{w_{1}, w_{2}, ..., w_{K}\}  $. If the  $ k $th UE is found behind the STAR-RIS (i.e., $ k\in   \mathcal{K}_{t}$), then $ w_{k} = t $. On the contrary,  if the  $ k $th UE is found in front of  the STAR-RIS (i.e., $ k\in   \mathcal{K}_{r}$), we have  $ w_{k} = r $. The radar is deployed with $ Q $ transmit and receive antennas. The communication nodes and the radar operate in the same frequency band. Moreover, we assume direct links between the BS and UEs.  
	\begin{figure}[!h]
		\begin{center}
			\includegraphics[width=0.7\linewidth]{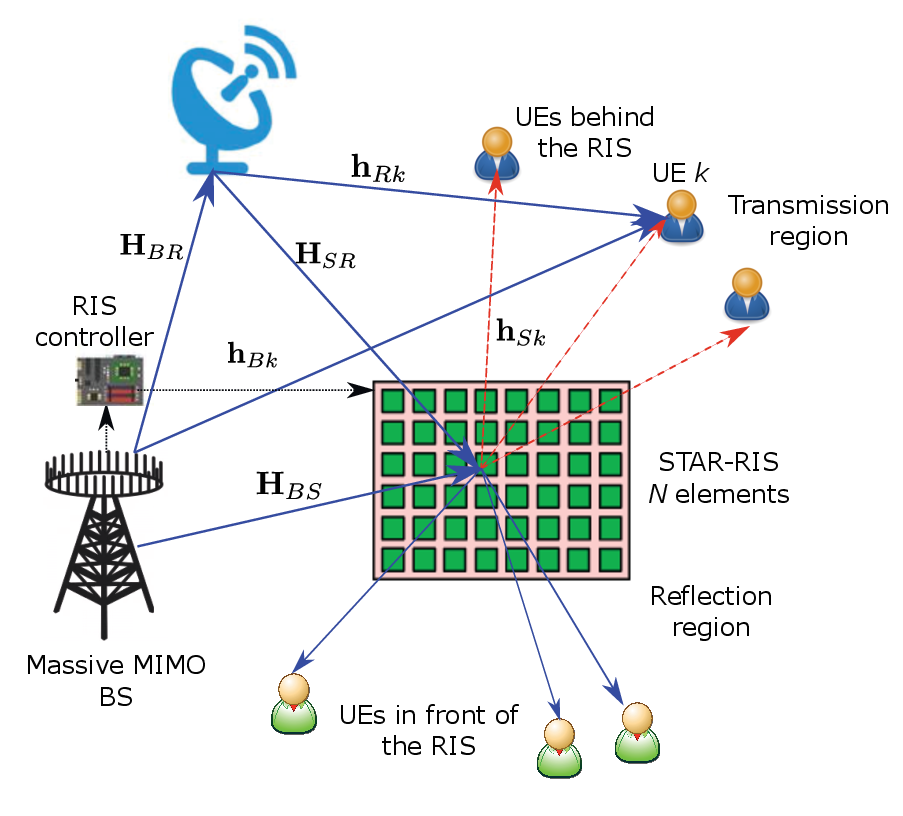}
			\caption{{ A STAR-RIS-assisted communication radar coexistence system. }}
			\label{Fig1}
		\end{center}
	\end{figure} 
	\subsection{STAR-RIS Model}
	The STAR-RIS is implemented by a uniform planar array (UPA) that consists of $ N_{\mathrm{h}} $ horizontally 	and $ N_{\mathrm{v}} $ vertically  passive elements, i.e.,  $  N_{\mathrm{h}} \times N_{\mathrm{v}}=N $. The operation of the surface is based on two independent coefficients, which can configure  the transmited ($ t $) and reflected ($ r $) signals. Especially, \textcolor{black}{given the impinging signal $ s_{n} $,} the transmitted 	and reflected signal by the $ n $th STAR-RIS element are denoted as  $ t_{n} =( {\beta_{n}^{t}}e^{j \phi_{n}^{t}})s_{n}$ and $ r_{n}=( {\beta_{n}^{r}}e^{j \phi_{n}^{r}})s_{n} $, respectively. Note that  $ {\beta_{n}^{w_{k}}}\in [0,1] $ and $ \phi_{n}^{w_{k}} \in [0,2\pi)$ correspond to the amplitude and phase parameters, while the UE can be found in any of the two regions that are geographically independent. \textcolor{black}{The impact of the phase estimation error, which is known as phase noise, studied in \cite{Papazafeiropoulos2021} for conventional RIS and in \cite{Papazafeiropoulos2022a} STAR-RIS, is left for future work}. The amplitudes are correlated according to the law of energy conservation as 
	\begin{align}
		(\beta_{n}^{t})^{2}+(\beta_{n}^{r})^{2}=1,  \forall n \in \mathcal{N}
	\end{align}
	but the phase shifts $ \phi_{n}^{t} $ and $ \phi_{n}^{r} $ can be chosen independently.\footnote{\textcolor{black}{Recently, a practical coupled phase-shift model was shown, while the assumption of  independent phase shifts is impractical, e.g., see \cite{Liu2022a,Wu2021}. However, this assumption regarding independence still allows revealing fundamental properties of the coexistence of radar and STAR-RIS, while the consideration of the practical phase shift model in \cite{Liu2022a,Wu2021} is an interesting idea for extension of the current work, i.e., to study the coexistence of radar and STAR-RIS by accounting for this dependence.}} For the sake of exposition, we denote $ \theta_{i}^{w_{k}}=e^{j\phi_{i}^{w_{k}}} $.
	\subsubsection{Operation protocols} The following analysis relies on the ES/MS protocols  \cite{Mu2021}. \textcolor{black}{The TS protocol, requiring a separate analysis, could be the topic of future work. } The main points of  ES and MS protocols are outlined below.
	
	\textit{\textbf{ES protocol:}} According to this protocol, all STAR-RIS elements serve simultaneously all UEs  located in both   $ t $ and $ r $ regions. The PB for the $ k
	$th  UE is expressed as $ \bPhi_{w_{k}}^{\mathrm{ES}}=\diag( {\beta_{1}^{w_{k}}}\theta_{1}^{w_{k}}, \ldots,  {\beta_{N}^{w_{k}}}\theta_{N}^{w_{k}}) \in \mathbb{C}^{N\times N}$, where $ \beta_{n}^{w_{k}} \ge 0 $, $ 		(\beta_{n}^{t})^{2}+(\beta_{n}^{r})^{2}=1 $, and $ |\theta_{n}^{w_{k}}|=1, \forall n \in \mathcal{N} $.
	
	\textit{\textbf{MS protocol:}} This protocols assumes that the surface is partitioned  into two groups of $ N_{t} $ and $ N_{r} $ elements that serve UEs in the  $  t$ and $ r $ regions, respectively, i.e.,  $ N_{t}+N_{r}=N $. The PB is given by $ \bPhi_{w_{k}}^{\mathrm{MS}}=\diag( {\beta_{1}^{w_{k}}} \theta_{1}^{w_{k}}, \ldots,  {\beta_{N}^{w_k}}\theta_{N}^{w_{k}}) \in \mathbb{C}^{N\times N}$, where $ \beta_{n}^{w_{k}}\in \{0,1\}$, $ 	(\beta_{n}^{t})^{2}+(\beta_{n}^{r})^{2}=1 $, and $ |\theta_{i}^{w_{k}}|=1, \forall n \in \mathcal{N} $. Given that the amplitude coefficients for transmission and reflection are restricted to binary values, the  MS protocol is inferior to the ES counterpart. However, MS has lower computational complexity regarding the  PB design.\footnote{\textcolor{black}{In this case, the optimal value regarding the partitioning of the surface could be the focus of future research.}}
	
	\subsection{Radar Model}
	During, a radar detection epoch, we assume $Z $	detection directions denoted by $ \{\bar{\theta}_{z}\}, \forall z  \in \mathcal{Z}=\{1, \dotsb, Z\}$. The pulse  repetition interval (PRI), which expresses the length of detection time in each direction $ \bar{\theta}_{z} $, is given by $ L $. Hence, the duration  of one detection epoch is $ ZL $. During each PRI, the transmission of the probing pulse takes place at the first time index, which is $ l=1 $, while the echo signal is received at the time index $ \bar{l} $. The probing pulse of the radar is expressed as
	\begin{align}
		\bar{\bx}[l]=\left\{
		\begin{array}{ll}
			\bar{\bu}_{z}\bar{s}_z &l=(z-1)L+1,\forall z \\
			\b0 & l\ne (z-1)L+1,\forall z  \\
		\end{array} 
		\right.
	\end{align}
	where $ \bar{s}_z $ with $ \EE\{\bar{s}_z^{*}\bar{s}_z  \} =1$ and $ \bar{\bu}_{z} \in \mathbb{C}^{Q \times 1}$ correspond to the radar signal and the radar transmit beamforming vector for direction $ \bar{\theta}_{z} $, respectively.

	\subsection{Channel Model}\label{ChannelModel} 
	We assume  narrowband quasi-static block fading channels with each block having a duration of $\tau_{\mathrm{c}}$ channel uses. 
	
	Let  $ \bH_{BS} \in \mathbb{C}^{M \times N}$, $ \bH_{BR} \in \mathbb{C}^{M \times Q}$, $ \bh_{Rk} \in \mathbb{C}^{Q \times 1}$, $ \bH_{SR} \in \mathbb{C}^{N \times Q}$,  $ \bh_{Sk} \in \mathbb{C}^{N \times 1}$, and $ \bh_{Bk} \in \mathbb{C}^{M \times 1}$ be the channels between the BS and the STAR-RIS, the BS and the radar, the radar and UE $ k $, the STAR-RIS and the radar,  the STAR-RIS and UE $ k $, and the BS and UE $ k $. All these links underlie correlated Rayleigh fading conditions since spatial correlation is unavoidable in practice. The study of Rician fading, which includes a line-of-sight (LoS) component, is the topic of future research. Specifically, we have
	\begin{align}
		\bH_{BS}&\!=\!\sqrt{\!\tilde{ \beta}_{BS}}\bR_{\mathrm{BS}}^{1/2}\bZ_{BS}\bR_{\mathrm{RIS}}^{1/2},\bH_{BR}\!=\!\sqrt{\!\tilde{ \beta}_{BR}}\bR_{\mathrm{BS}}^{1/2}\bZ_{BR}\bR_{\mathrm{R}}^{1/2},\label{eq2}\\
		\bh_{Rk}&\!=\!\sqrt{\tilde{ \beta}_{Rk}}\bR_{\mathrm{R}}^{1/2}\bz_{Rk},~~~~~~~~\!\bH_{SR}\!=\!\sqrt{\tilde{ \beta}_{SR}}\bR_{\mathrm{RIS}}^{1/2}\bZ_{SR}\bR_{\mathrm{R}}^{1/2},\label{eq3}\\
		\bh_{Sk}&\!=\!\sqrt{\tilde{ \beta}_{Sk}}\bR_{\mathrm{RIS}}^{1/2}\bz_{Sk},~~~~~~\!~~~\bh_{Bk}\!=\!\sqrt{\tilde{ \beta}_{Bk}}\bR_{\mathrm{BS}}^{1/2}\bz_{Bk},\label{eq4}
	\end{align}
	where the correlation matrices $ \bR_{\mathrm{BS}} \in \mathbb{C}^{M \times M} $, $ \bR_{\mathrm{RIS}} \in \mathbb{C}^{N \times N} $, and  $ \bR_{\mathrm{R}} \in \mathbb{C}^{Q \times Q}$ are Hermitian-symmetric positive semi-definite, and correspond to the BS, the RIS, and the radar, respectively. They can be obtained by existing estimation methods \cite{Neumann2018,Upadhya2018}. Hence, we assume that they are known by the network. Both  $  \bR_{\mathrm{BS}} $ and $  \bR_{\mathrm{R}} $ can be modeled  as in \cite{Hoydis2013}, while $ \bR_{\mathrm{RIS}} $ can be  modeled as in \cite{Bjoernson2020}. \textcolor{black}{Specifically, we assume that $  \bR_{\mathrm{BS}}=[\bA~\b0_{M\times M-P}]$, where $ \bA=[\ba(\varphi_{1}) \cdots \ba(\varphi_{P})] \in \mathbb{C}^{N\times P}$ is composed of the steering vector $ \ba(\varphi)\in \mathbb{C}^{N} $ defined as
	\begin{align}
		\ba(\varphi)=\frac{1}{\sqrt{P}}[1, e^{-i2\pi\omega \sin(\varphi), \ldots,-i2\pi\omega (N-1)\sin(\varphi)}]^{\T}, 
	\end{align}
where $ \omega =0.3 $ is the antenna spacing in multiples of the wavelength and $ \varphi_{p}=-\pi/2+(p-1)\pi/P $, $ p= 1, \ldots, P$ are the uniformly distributed angles of transmission with $ P=M/2 $ \cite{Hoydis2013}. Similarly, we describe $  \bR_{\mathrm{R}} $.  Also, the $ (l,m) $th entry of $ \bR_{\mathrm{RIS}} $ is described as
	\begin{align}
		[\bR_{\mathrm{RIS}}]_{l,m}=\mathrm{sinc}\Big(\frac{2\|\bu_{l}-\bu_{m}\|}{\lambda}\Big), ~~\forall \{l,m\}\in \mathcal{N},
	\end{align}
	where $ \|\bu_{l}-\bu_{m}\| $ expresses the distance between the $ l $th and $ m $th  STAR-RIS elements, and $ \lambda $ is the wavelength \cite{Bjoernson2020}.}\footnote{\textcolor{black}{Another way of practical calculation of the covariance matrices follows. In particular,  the covariance matrices depend on the distances between the RIS elements and between the BS antennas, respectively. Moreover,  they depend on  the angles. Generally, the distances are known from the construction of the  RIS and the BS, while  the angles can be calculated when the 	locations are given.  Thus, the covariance matrices can be considered to be known.}}
	
	 Also, $ \tilde{ \beta}_{BS} $,  $ \tilde{ \beta}_{BR} $, $ \tilde{ \beta}_{Rk} $, $ \tilde{ \beta}_{SR} $,  $ \tilde{ \beta}_{Sk} $,  $ \tilde{ \beta}_{Bk} $, and $ \mathrm{vec}(\bZ_{BS})\sim \mathcal{CN}\left(\b0,\Id_{MN}\right) $, $ \mathrm{vec}(\bZ_{BR})\sim \mathcal{CN}\left(\b0,\Id_{MQ}\right) $,  $ \bar{\bz}_{Rk} \sim \mathcal{CN}\left(\b0,\Id_{Q}\right) $, $ \mathrm{vec}(\bZ_{SR})\sim \mathcal{CN}\left(\b0,\Id_{NQ}\right) $, $ \bar{\bz}_{Sk} \sim \mathcal{CN}\left(\b0,\Id_{N}\right) $, $ \bar{\bz}_{Bk} \sim \mathcal{CN}\left(\b0,\Id_{M}\right) $ are the corresponding path losses and fast fading vectors, respectively.

	\section{Data Transmission}\label{PerformanceAnalysis}
	
	Regarding the radar, the echo from the target in the direction $ \bar{\theta}_{z} $ is written as
	\begin{align}
		\bar{\by}_{z}[l]=\al_{z}\ba(\bar{\theta}_{z})\ba^{\T}(\bar{\theta}_{z})\bar{\bx}[l-\bar{l}],\label{echo}
	\end{align}
	where  $ \ba(\bar{\theta}) =[1, e^{j\frac{2\pi \Delta}{\lambda}\sin(\theta)}, \ldots, e^{j\frac{2\pi \Delta}{\lambda}(Q-1)\sin(\theta)}]$ with  $ \Delta $ being the antenna spacing. \textcolor{black}{Note that  we have assumed that the STAR-RIS is placed near the communication transmitter and receiver and far away from the radar. The reflecting signals are only being considered when a RIS is placed near the transmitter or receiver. }
	
	The interference from the BS to the radar consists of two links, which correspond to the BS-radar channel and the BS-STAR-RIS-radar channel.  Mathematically, this interference is written as
	\begin{align}
		\tilde{\by}_{z}[l]=(\bH_{BR}^{\H}+\bH_{SR}^{\H}\bPhi_{w_{k}}^{\H}\bH_{BS}^{\H})\bs[l],\label{interference}
	\end{align}
	where $\bs[l]=\sqrt{\lambda} \sum_{i=1}^{K}\sqrt{p_{i}}\bff_{i}x_{i}[l]$ expresses the transmit signal vector  by the BS,  
	and $ \bar{\lambda} $ is a constant which is found such that $ \EE[\mathbf{s}[l]^{\H}\mathbf{s}[l]]=\rho $ with $ \rho $ being the total average power budget. Note that $\bff_{i} \in \bbC^{M \times 1}$ is the linear precoding vector and $ x_{i}[l] $ is     the corresponding data symbol with $ \EE\{|x_{i}[l]|^{2}\}=1 $. 
	Hence, due to $  \EE[\mathbf{s}[l]^{\H}\mathbf{s}[l]]=\rho$, $ \bar{\lambda} $ is given as  $ \bar{\lambda}=\frac{K}{\EE\{\tr\textcolor{black}{(\bF\bF^{\H})}\}} $, where $ \bF=[\bff_{1}, \ldots, \bff_{K}] \in\mathbb{C}^{M \times K}$.

	Thus, the received signal by the radar is given based on \eqref{echo} and \eqref{interference} by
	\begin{align}
		\by_{z}[l]&=\al_{z}\ba(\bar{\theta}_{z})\ba^{\T}(\bar{\theta}_{z})\bar{\bx}[l-\bar{l}]\nn\\
		&+(\bH_{BR}^{\H}+\bH_{SR}^{\H}\bPhi_{w_{k}}^{\H}\bH_{BS}^{\H})\bs[l]+\bn[l],\label{received1}
	\end{align}
	where $ \bn[l] \sim \mathcal{CN}(\b0,\sigma_{r}^{2}\Id_{Q})$ is the complex circular Gaussian
	noise vector that consists of additive white Gaussian noise and clutter. Therein, we apply  a receive beamforming vector $ \bw_{z} $ to detect the echo from direction $ \bar{\theta}_{z} $.
	
	The SINR of \eqref{received1} can be obtained by treating the interference from the BS as independent Gaussian noise, which is a worst-case assumption when computing the mutual information \cite{Hassibi2003}. Note that the effective channel is $\bw_{z} \al_{z}\ba(\bar{\theta}_{z})\ba^{\T}(\bar{\theta}_{z}\textcolor{black}{)}  \bar{\bu}_{z} $. Hence, the SINR is written as in \eqref{SINR1} at the top of the next page.

\begin{figure*}
\begin{align}
		\gamma_{z}=\frac{|\bw_{z}^{\H} \al_{z}\ba(\bar{\theta}_{z})\ba^{\T}(\bar{\theta}_{z} ) \bar{\bu}_{z}|^{2}}{\frac{\bar{\lambda} \rho}{K}\EE \{|\bw_{z}^{\H}(\bH_{BR}^{\H}+\bH_{SR}^{\H}\bPhi_{w_{k}}^{\H}\bH_{BS}^{\H}) \sum_{i=1}^{K}\bff_{i}|^{2}\}+\sigma^{2}_{r}\EE\{\|\bw_{z}\|^{2}\}}\label{SINR1}.
	\end{align}
	\hrulefill
\end{figure*}

	In the case of the communication between the BS and UE $ k $ via the STAR-RIS, two links exist. Specifically, we have the direct channel  between the BS and UE $ k $, and the BS-STAR-RIS-UE $ k $ channel. In parallel, the signal from the radar  appears as interference to UE $ k $ via two links, which are the radar-UE $ k $ link and the radar-STAR-RIS-UE $ k $ link. As a result, the received signal at UE $ k $ at time index $ l $ can be written as
	\begin{align}
		y\textcolor{black}{_{k}}[l]&=(\bh_{Bk}^{\H}+\bh_{Sk}^{\H}\bPhi_{w_{k}}^{\H}	\bH_{BS}^{\H})\bs[l]\nn\\
		&+(\bh_{Rk}^{\H}+\bh_{Sk}^{\H}\bPhi_{w_{k}}^{\H}\bH_{SR})\bar{\bx}[l]+n_{c}[l],
	\end{align}
	where the first part corresponds to the communication signal, the second part is the interference from the radar, and $ n_{c}[l]\sim \mathcal{CN}(0, \sigma_{c}^{2}) $ is the complex white Gaussian noise at UE $ k $ with zero mean and variance $ \sigma_{c}^{2} $.
	
	The overall energy of the received signal during one radar detection epoch is obtained as
	\begin{align}
		\EE\{\sum_{l=1}^{ZL} y\textcolor{black}{_{k}}[l]^{\H}y\textcolor{black}{_{k}}[l]\}&=\frac{ZL \bar{\lambda} \rho}{K}\sum_{i=1}^{K}|\bh_{BSk}^{\H}\bff_{i}|^{2}\nn\\
		&+\sum_{z=1}^{Z}|\bh_{RSk}^{\H}\bar{\bu}_{z}|^{2}+ZL\sigma_{c}^{2}.
	\end{align}

	Based on the use-and-then forget (UaTF) technique \cite{massivemimobook}, the downlink achievable SINR over one radar detection epoch is given by
	\eqref{SINR2}
	\begin{figure*}
			\begin{align}
		\gamma_{k}=\frac{|\EE\{\bh_{BSk}^{\H}\bff_{k}\}|^{2}}{\EE\{|\bh_{BSk}^{\H}\bff_{k}\!-\!		\EE\{\bh_{BSk}^{\H}\bff_{k}\}|^{2}\}\!+	\!	\sum_{i\ne k}\EE\{|\bh_{BSk}^{\H}\bff_{i}|^{2}\}\!+\!\frac{K}{\bar{\lambda} \rho ZL}\sum_{z=1}^{Z}\EE\{|\bh_{RSk}^{\H}\bar{\bu}_{z}|^{2}\}\!+\!\frac{\sigma_{c}^{2}K}{\bar{\lambda} \rho}}\label{SINR2}.
	\end{align}
	\hrulefill
\end{figure*}
	where we have denoted $ \bh_{BSk}=\bh_{Bk}+	\bH_{BS}\bPhi_{w_{k}}\bh_{Sk}\sim\mathcal{CN}(\b0, \bR_{BSk})$. Note that $ \bR_{BSk}=\tilde{ \beta}_{Bk}\bR_{\mathrm{BS}}+\tilde{ \beta}_{BSk}\tr(\bR_{\mathrm{RIS}}\bPhi_{w_{k}}\bR_{\mathrm{RIS}}\bPhi_{w_{k}}^{\H})\bR_{\mathrm{BS}} $ with $\tilde{ \beta}_{BSk}= \tilde{ \beta}_{BS}\tilde{ \beta}_{Sk} $. Also, we denote  $ \bh_{RSk}=\bh_{Rk}+\bH_{SR}^{\H}\bPhi_{w_{k}}\bh_{Sk}\sim\mathcal{CN}(\b0, \bR_{RSk})$, where $ \bR_{RSk}=\tilde{ \beta}_{Rk}\bR_{\mathrm{R}}+\tilde{ \beta}_{RSk}\tr(\bR_{\mathrm{RIS}}\bPhi_{w_{k}}\bR_{\mathrm{RIS}}\bPhi_{w_{k}}^{\H}) \bR_{\mathrm{R}}$ with $\tilde{ \beta}_{RSK}=\tilde{ \beta}_{SR}\tilde{ \beta}_{Sk} $
	The final expressions of \eqref{SINR1} and \eqref{SINR2}  depend on  the choice of the decoder $ \bw_{z} $  and precoder $ \bff_{i} $.  For the sake of simplicity and exposition of fundamental properties, we  select maximum ratio combining (MRC) decoding and maximum ratio transmission (MRT) precoding, i.e.,  $ \bw_{z}=\al_{z}\ba(\bar{\theta}_{z})\ba^{\T}(\bar{\theta}_{z} $) and $ \bff_{i}=\bh_{Bk}+\bH_{BS}\bPhi_{w_{k}}	\bh_{Sk} $, \textcolor{black}{ while the theoretical impact of more sophisticated linear transmission techniques such as regularized zero-forcing (RZF) will be investigated in future work. However, in Sec. V, we have shown the impact of RZF and minimum mean-squared error (MMSE) decoding in terms of simulations.}

	\begin{Theorem}\label{Theorem1}
		Given the PB $\bPhi_{w_{k}}$, the achievable SINR of the received signal at the radar  in a RIS-assisted communication radar coexistence, accounting for  correlated fading, is given by \eqref{SINR10},
		\begin{figure*}
		\begin{align}
			\gamma_{z}=\frac{|\bw_{z}^{\H} \al_{z}\ba(\bar{\theta}_{z})\ba^{\T}(\bar{\theta}_{z} ) \bar{\bu}_{z}|^{2}}{\frac{\bar{\lambda} \rho}{K}(\bw_{z}^{\H}\bR_{\mathrm{R}}\bw_{z}\sum_{i=1}^{K}\tr(\bR_{BSi}\bR_{BS})(\tilde{ \beta}_{BR}+\tilde{ \beta}_{BSR}\tr(\bPhi_{w_{k}}^{\H}\bR_{\mathrm{RIS}}\bPhi_{w_{k}}\bR_{\mathrm{RIS}})))+\sigma^{2}_{r}\|\bw_{z}\|^{2}}\label{SINR10}.
		\end{align}
		\hrulefill
\end{figure*}
		where 
		\begin{align}
			\!\!\!	\bar{\lambda}=\frac{1}{\sum_{i=1}^{K}\!\tr(\bR_{BSi})}.\label{normalization1}
		\end{align}
	\end{Theorem}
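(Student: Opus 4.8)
The plan is to observe that the numerator of \eqref{SINR1} carries no fast-fading randomness once $\bw_z$, $\bar\bu_z$ and $\al_z$ are fixed, so proving \eqref{SINR10} reduces to evaluating the two expectations in its denominator: the aggregate interference power $\EE\{|\bw_z^{\H}\bG\sum_{i=1}^{K}\bff_i|^2\}$ with $\bG=\bH_{BR}^{\H}+\bH_{SR}^{\H}\bPhi_{w_k}^{\H}\bH_{BS}^{\H}$, and the noise term, for which $\EE\{\|\bw_z\|^2\}=\|\bw_z\|^2$ since $\bw_z$ is deterministic. After inserting the MRT precoder $\bff_i=\bh_{Bi}+\bH_{BS}\bPhi_{w_i}\bh_{Si}$, the whole computation is driven by the mutual independence and zero mean of $\bZ_{BR},\bZ_{SR},\bZ_{BS},\bz_{Bi},\bz_{Si}$, together with the elementary identities $\EE\{\bZ\bD\bZ^{\H}\}=\tr(\bD)\Id$ and $\EE\{\bZ^{\H}\bD\bZ\}=\tr(\bD)\Id$, valid for an i.i.d.\ $\mathcal{CN}(0,1)$ matrix $\bZ$ and a conformable deterministic $\bD$.

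First I would split $\bG$ into the BS--radar path $\bH_{BR}^{\H}$ and the BS--STAR-RIS--radar path $\bH_{SR}^{\H}\bPhi_{w_k}^{\H}\bH_{BS}^{\H}$. Because $\bZ_{BR}$ is independent of, and appears linearly and zero-mean in, the first path only, conditioning shows that the two cross contributions to $\EE\{|\cdot|^2\}$ vanish, so the interference power splits additively. For the BS--radar path I would condition on the precoders, apply $\EE\{\bZ_{BR}^{\H}\bv\bv^{\H}\bZ_{BR}\}=\|\bv\|^2\Id$ with $\bv=\bR_{\mathrm{BS}}^{1/2}\sum_i\bff_i$, and then use $\EE\{(\sum_i\bff_i)^{\H}\bR_{\mathrm{BS}}(\sum_i\bff_i)\}=\sum_i\tr(\bR_{\mathrm{BS}}\bR_{BSi})$, where the inter-user cross terms drop by independence and $\EE\{\bff_i\bff_i^{\H}\}=\bR_{BSi}$. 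This already isolates the factor $\tilde\beta_{BR}\,\bw_z^{\H}\bR_{\mathrm{R}}\bw_z\sum_i\tr(\bR_{BSi}\bR_{\mathrm{BS}})$.

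The core of the argument is the BS--STAR-RIS--radar path, $\EE\{|\bw_z^{\H}\bH_{SR}^{\H}\bPhi_{w_k}^{\H}\bH_{BS}^{\H}\sum_i\bff_i|^2\}$. Splitting $\bff_i$ into its direct part $\bh_{Bi}$ and its RIS part $\bH_{BS}\bPhi_{w_i}\bh_{Si}$, conditioning on $\bZ_{SR},\bZ_{BS}$ shows the two parts do not cross-correlate. The direct part is a routine second-moment evaluation that, after peeling off the $\bZ_{BS}$ and then the $\bZ_{SR}$ expectations with the trace identities, supplies the $\tilde\beta_{Bi}$ share of the answer. The main obstacle is the RIS part: here the copy of $\bH_{BS}^{\H}$ in the interference channel and the copy of $\bH_{BS}$ inside the MRT precoder coincide, so after the expectation over $\bh_{Si}$ one is left with the genuinely fourth-order moment $\EE\{\bp^{\H}\bH_{BS}^{\H}\bH_{BS}\,\bB\,\bH_{BS}^{\H}\bH_{BS}\bp\}$, with $\bB=\sum_i\tilde\beta_{Si}\bPhi_{w_i}\bR_{\mathrm{RIS}}\bPhi_{w_i}^{\H}$. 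This quartic form in $\bZ_{BS}$ must be handled by Wick's rule (equivalently, two applications of the trace identity to the two admissible pairings of $\bZ_{BS}$). One pairing reproduces, after the final $\bZ_{SR}$ expectation, the clean factor $\tilde\beta_{BSR}\,\tr(\bPhi_{w_k}^{\H}\bR_{\mathrm{RIS}}\bPhi_{w_k}\bR_{\mathrm{RIS}})$ times the $\tilde\beta_{BSi}$ share of $\sum_i\tr(\bR_{BSi}\bR_{\mathrm{BS}})$; recovering exactly \eqref{SINR10} then requires discarding the second, coherent pairing, i.e.\ treating the two occurrences of $\bH_{BS}$ as statistically independent.

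Finally I would reassemble the two paths, factor out the common $\bw_z^{\H}\bR_{\mathrm{R}}\bw_z\sum_{i=1}^{K}\tr(\bR_{BSi}\bR_{\mathrm{BS}})$, collect the bracket $\tilde\beta_{BR}+\tilde\beta_{BSR}\tr(\bPhi_{w_k}^{\H}\bR_{\mathrm{RIS}}\bPhi_{w_k}\bR_{\mathrm{RIS}})$, add $\sigma_r^2\|\bw_z\|^2$, and obtain \eqref{SINR10}. The normalization \eqref{normalization1} then follows directly from the transmit-power constraint $\EE\{\bs[l]^{\H}\bs[l]\}=\rho$ together with $\EE\{\|\bff_i\|^2\}=\tr(\bR_{BSi})$, i.e.\ $\EE\{\tr(\bF\bF^{\H})\}=\sum_i\tr(\bR_{BSi})$. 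I expect the fourth-order moment of $\bH_{BS}$ to be the one delicate step: it is the only place where Gaussianity beyond second order enters, and it is where the coincidence of the forward BS--RIS channel with the one embedded in the MRT precoder must be resolved, the compact product form of \eqref{SINR10} corresponding precisely to decorrelating those two occurrences of $\bH_{BS}$.
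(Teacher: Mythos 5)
Your proposal is correct and follows essentially the same route as the paper's Appendix A: you split the interference power into the BS--radar and BS--STAR-RIS--radar contributions via the zero-mean/uncorrelatedness argument, evaluate each with the Gaussian trace identity $\EE\{\bZ^{\H}\bD\bZ\}=\tr(\bD)\Id$, and obtain the normalization from $\EE\{\|\bff_i\|^2\}=\tr(\bR_{BSi})$. Where you genuinely depart from the paper is the step you yourself flag as delicate: the quartic moment in $\bH_{BS}$ created by the coincidence of the interference channel with the copy of $\bH_{BS}$ inside the MRT precoder. The paper never confronts this moment: in Appendix A it inserts the \emph{unconditional} covariance $\EE\{\bff_i\bff_i^{\H}\}=\bR_{BSi}$ between $\bH_{BS}^{\H}$ and $\bH_{BS}$ and then averages the outer $\bH_{BS}$ factors a second time, whereas the conditional covariance given $\bH_{BS}$ is $\tilde{\beta}_{Bi}\bR_{\mathrm{BS}}+\tilde{\beta}_{Si}\bH_{BS}\bPhi_{w_i}\bR_{\mathrm{RIS}}\bPhi_{w_i}^{\H}\bH_{BS}^{\H}$; the substitution therefore performs silently exactly the decorrelation you describe. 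Your Wick-expansion view is the more rigorous one: the exact fourth-order moment $\EE\{\bZ^{\H}\bA\bZ\bB\bZ^{\H}\bC\bZ\}=\tr(\bA)\tr(\bC)\bB+\tr(\bA\bC)\tr(\bB)\Id$ contains a second pairing term, and the product form \eqref{SINR10} is exact only if that coherent pairing is dropped. So your proof reproduces the theorem as stated by the same decomposition, and in doing so it exposes that the theorem's interference term is, strictly speaking, the decorrelated approximation of the true interference power --- a looseness in the paper's own derivation rather than a gap in yours.
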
 
	\proof The proof  is provided in Appendix~\ref{Theor1}.\endproof
	
	\begin{Theorem}\label{Theorem2}
		Given the PB $\bPhi_{w_{k}}$, the achievable SINR of the received signal at UE $k$  in a RIS-assisted communication radar coexistence, accounting for  correlated fading, is given by 
		\begin{align}
			\mathbf{\gamma}_{k}=\frac{S_{l}}{I_{k}}\label{SINR4},
		\end{align}
		where
		\begin{align}
			S_{k}&=\tr^{2}\left(\bR_{BSk}\right)\!,\label{Sk}\\
			I_{k}&=\tr(\bR_{BSk}^{2})-\tr^{2}(\bR_{BSk})+		\sum_{i\ne k}\tr\!\left(\bR_{BSk}\bR_{BSi} \right)\nn\\
			&+\frac{K}{\bar{\lambda} \rho ZL}\sum_{z=1}^{Z}\tr(\bar{\bu}_{z}\bar{\bu}_{z}^{\H}\bR_{RSk})+\frac{\sigma_{c}^{2}K}{\bar{\lambda} \rho}. \label{Ik}
		\end{align}
	\end{Theorem}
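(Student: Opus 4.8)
The plan is to evaluate the use-and-then-forget SINR \eqref{SINR2} term by term after substituting the MRT precoder $\bff_i=\bh_{BSi}$, so that everything reduces to first-, second- and fourth-order moments of the two effective (cascaded) channels $\bh_{BSk}$ and $\bh_{RSk}$, whose Gaussian characterizations $\bh_{BSk}\sim\mathcal{CN}(\b0,\bR_{BSk})$ and $\bh_{RSk}\sim\mathcal{CN}(\b0,\bR_{RSk})$ are already in hand. The numerator is then immediate: $\EE\{\bh_{BSk}^\H\bff_k\}=\EE\{\|\bh_{BSk}\|^2\}=\tr(\bR_{BSk})$, so the desired-signal power is $S_k=|\tr(\bR_{BSk})|^2=\tr^2(\bR_{BSk})$, which is \eqref{Sk}.

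First I would dispose of the two easy denominator contributions. Conditioning on the deterministic radar beamformer $\bar\bu_z$ and using $\EE\{\bh_{RSk}\bh_{RSk}^\H\}=\bR_{RSk}$ gives $\EE\{|\bh_{RSk}^\H\bar\bu_z|^2\}=\bar\bu_z^\H\bR_{RSk}\bar\bu_z=\tr(\bar\bu_z\bar\bu_z^\H\bR_{RSk})$, which summed over $z$ and scaled by $K/(\bar\lambda\rho ZL)$ produces the radar-interference term of \eqref{Ik}; the white-noise floor contributes the constant $\sigma_c^2K/(\bar\lambda\rho)$ directly. The remaining two terms are quadratic forms in $\bh_{BSk}$ and I would evaluate them with the standard complex-Gaussian identities $\EE\{\|\bh\|^4\}=\tr^2(\bR)+\tr(\bR^2)$ and $\EE\{|\bh^\H\bg|^2\}=\tr(\bR_{\bh}\bR_{\bg})$ for independent $\bh,\bg$. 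The beamforming-gain uncertainty $\EE\{|\bh_{BSk}^\H\bff_k|^2\}-|\EE\{\bh_{BSk}^\H\bff_k\}|^2$ then collapses to $\tr(\bR_{BSk}^2)$, while the multi-user leakage $\sum_{i\ne k}\EE\{|\bh_{BSk}^\H\bh_{BSi}|^2\}$ collapses to $\sum_{i\ne k}\tr(\bR_{BSk}\bR_{BSi})$, giving the interference terms of \eqref{Ik}; inserting $\bar\lambda$ from \eqref{normalization1} assembles \eqref{SINR4}.

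The hard part will be justifying the Gaussian-equivalent fourth moments for the \emph{cascaded} channel, since $\bH_{BS}\bPhi_{w_k}\bh_{Sk}$ is a product of two Gaussian factors and is not itself Gaussian, so the identities above cannot simply be quoted. I expect the real work to be (i) confirming the effective covariance $\bR_{BSk}=\tilde\beta_{Bk}\bR_{\mathrm{BS}}+\tilde\beta_{BSk}\tr(\bR_{\mathrm{RIS}}\bPhi_{w_k}\bR_{\mathrm{RIS}}\bPhi_{w_k}^\H)\bR_{\mathrm{BS}}$ by averaging $\bH_{BS}\bPhi_{w_k}\bh_{Sk}\bh_{Sk}^\H\bPhi_{w_k}^\H\bH_{BS}^\H$ over $\bZ_{BS}$ and $\bz_{Sk}$ using $\EE\{\bZ\bA\bZ^\H\}=\tr(\bA)\Id$, and (ii) evaluating the surviving quartic terms in $\bZ_{BS}$ that appear in $\EE\{|\bh_{BSk}^\H\bh_{BSi}|^2\}$ after the odd, zero-mean cross products are discarded; the spatial correlations $\bR_{\mathrm{BS}},\bR_{\mathrm{RIS}},\bR_{\mathrm{R}}$ are precisely what block the scalar shortcuts of uncorrelated models and force each quantity into the matrix-trace forms of \eqref{Sk}--\eqref{Ik}.
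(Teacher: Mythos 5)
Your route is structurally the same as the paper's Appendix B: apply the UatF bound \eqref{SINR2} with $\bff_i=\bh_{BSi}$ and reduce everything to moments of the effective channels. Your numerator ($\tr^{2}(\bR_{BSk})$), radar term ($\tr(\bar{\bu}_{z}\bar{\bu}_{z}^{\H}\bR_{RSk})$), noise term, and multi-user term ($\tr(\bR_{BSk}\bR_{BSi})$) all coincide with the paper's \eqref{term2}, \eqref{55}, and \eqref{54}. The concrete problem is the beamforming-uncertainty term: your Gaussian identity $\EE\{\|\bh\|^{4}\}=\tr^{2}(\bR)+\tr(\bR^{2})$ makes $\EE\{|\bh_{BSk}^{\H}\bh_{BSk}|^{2}\}-|\EE\{\bh_{BSk}^{\H}\bh_{BSk}\}|^{2}$ collapse to $\tr(\bR_{BSk}^{2})$, whereas the theorem's $I_k$ in \eqref{Ik} contains $\tr(\bR_{BSk}^{2})-\tr^{2}(\bR_{BSk})$. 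The paper's own proof reaches that expression (see \eqref{est3}) by effectively evaluating $\EE\{|\bh_{BSk}^{\H}\bh_{BSk}|^{2}\}=\tr(\bR_{BSk}^{2})$, i.e.\ without the $\tr^{2}(\bR_{BSk})$ contribution your identity supplies. Carried to completion, your argument therefore proves a formula whose denominator exceeds the stated one by $\tr^{2}(\bR_{BSk})$, and it does not establish Theorem~\ref{Theorem2} as written. You should be aware that the tension is on the paper's side: a variance cannot be negative, yet $\tr(\bR_{BSk}^{2})-\tr^{2}(\bR_{BSk})\le 0$ for any positive semi-definite $\bR_{BSk}$, so your version is the internally consistent one — but then you must say explicitly that you are deriving a corrected statement rather than \eqref{Ik}.

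Separately, the ``hard part'' you plan for — justifying Gaussian-type fourth moments for the cascaded vector $\bh_{Bk}+\bH_{BS}\bPhi_{w_{k}}\bh_{Sk}$ (which is not Gaussian) and handling the coupling of $\bh_{BSk}$ and $\bh_{BSi}$ through the common factor $\bH_{BS}$ — is not carried out in the paper at all. The main text simply asserts $\bh_{BSk}\sim\mathcal{CN}(\b0,\bR_{BSk})$, and the cross term \eqref{54} is justified only by ``uncorrelation,'' even though uncorrelatedness does not yield the factorization $\EE\{|\bh_{BSk}^{\H}\bh_{BSi}|^{2}\}=\tr(\bR_{BSk}\bR_{BSi})$ when both vectors share $\bH_{BS}$. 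So your plan is strictly more careful than the proof it is measured against; if you actually evaluate the quartic terms in $\bZ_{BS}$, you will pick up additional corrections in both the variance and the multi-user terms, moving you further from \eqref{Ik}. Matching the paper requires adopting its working idealization — independent Gaussian effective channels with covariances $\bR_{BSk}$ — and even under that idealization the standard calculus lands on $\tr(\bR_{BSk}^{2})$, not on $\tr(\bR_{BSk}^{2})-\tr^{2}(\bR_{BSk})$.
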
 
	\proof The proof  is provided in Appendix~\ref{Theor2}.\endproof

	The downlink achievable  sum SE is obtained as
	\begin{align}
		\mathrm{SE}	=\frac{1}{\tau_{\mathrm{c}}}\sum_{k=1}^{K}\log_{2}\left ( 1+\mathbf{\gamma}_{k}\right)\!.\label{LowerBound}
	\end{align}

	\textcolor{black}{Both Theorems 1 and 2 depend on statistical CSI in terms of path losses and correlation matrices. Hence, they are quite advantageous compared to SINRs based on I-CSI, which  would require frequent optimization taking place at each coherence interval.   Both theorems   unveil the impact of spatial correlation on the achievable SINRs, which is discussed in the following remark.}
	\begin{remark}
		\textcolor{black}{The presence of the spatial correlation can enhance the capability of the STAR-RIS to tailor the wireless channel. This is obvious from the direct inspection of the SINRS presented in Theorems 1 and 2. To be more specific, consider the term $ \bPhi_{w_{k}}^{\H}\bR_{\mathrm{RIS}}\bPhi_{w_{k}}\bR_{\mathrm{RIS}} $ as an example. If the spatial correlation is negligible, i.e., $ \bR_{\mathrm{RIS}}=\Id_{N} $, this term becomes fixed and equal to $ \sum_{i=1}^{N}(\beta_{i}^{w_{k}})^{2} $ without the option to be adjusted by the surface controller since the matrix $ \bPhi_{w_{k}} $ is a unitary matrix. Fortunately, correlation appears in practice, which means that that the corresponding terms can be shaped by the surface as will be shown below in Section IV.}''	\end{remark}
	
	\section{Sum SE Maximization Design}\label{SumSEMaximizationDesign}
	In this section, we elaborate on the radar transmit and receive beamforming as well as the PB of the STAR-RIS that optimize the sum SE	of a STAR-RIS-assisted communication radar coexistence system. Based on infinite-resolution phase shifters, the formulation of the  optimization problem providing the maximization of the sum SE is written as
	\begin{subequations}
		\begin{align}
			(\mathcal{P}1)~~&\max_{\bar{\bu}_{z},\bw_{z},
				\thetv,\betv} 	\;		\mathrm{{SE}}
			\label{Maximization1} \\
			&~~~	\mathrm{s.t}~~~~\;\!	\gamma_{z}\ge \gamma^{r}	\label{Maximization2} \\
			&\;\quad\;\;\;\;\;\!\!~~~~\!\textcolor{black}{\sum_{z=1}^{Z}\|\bar{\bu}_{z}\|^{2}}\le P_{\mathrm{max}},\label{Maximization21}\\
			&~~~~~~~~\quad (\beta_{n}^{t})^{2}+(\beta_{n}^{r})^{2}=1,  \forall n \in \mathcal{N}\label{Maximization22}\\
			&~~~~~~~~\quad\beta_{n}^{t}\ge 0, \beta_{n}^{r}\ge 0,~\forall n \in \mathcal{N}\label{Maximization23}\\
			&~~~~~~~~\quad|\theta_{n}^{t}|=|\theta_{n}^{r}|=1, ~\forall n \in \mathcal{N}
			\label{Maximization30} 
		\end{align}
	\end{subequations}
	where we denote $\thetv=[(\thetv^{t})^{\T}, (\thetv^{r})^{\T}]^{\T} $ and $\betv=[(\betv^{t})^{\T}, (\betv^{r})^{\T}]^{\T} $  to achieve a compact description. Also, $ \gamma^{r} $ expresses the SINR threshold. The constraint \eqref{Maximization2} guarantees the radar detection performance for each direction. Constraint \eqref{Maximization21} expresses the radar transmit power limitation with $ P_{\mathrm{max}} $ being the budget of the total radar power consumption. Constraints \eqref{Maximization22} and \eqref{Maximization23} concern the amplitude of the STAR-RIS, while \eqref{Maximization30} denotes the   uni-modulus constraint on the phase shifts.
	
	The  problem $ (\mathcal{P}1) $ is difficult to solve because the objective function is  non-convex, and a coupling among the optimization variables of the STAR-RIS appears, which are the amplitudes and the phase shifts for transmission and reflection. For the solution, we perform alternating optimization by keeping first the radar beamforming vectors fixed while optimizing the STAR-RIS  in an iterative manner until reaching  convergence to a stationary point. Next, the STAR-RIS PBs are considered fixed, and we optimize the radar beamforming vectors. \textcolor{black}{Since the gradient ascent results in local optimum, the overall algorithm, consisted of the two subproblems, results in local optimal.}
	\subsection{Problem Formulation for STAR-RIS}\label{ProblemFormulation2}
	Given the transmit and receive radar beamforming, the formulation problem for  the STAR-RIS is written as
	\begin{subequations}
		\begin{align}
			(\mathcal{P}2)~~&\max_{	\thetv,\betv} 	\;		\mathrm{{SE}}
			\label{Maximization11} \\
			&~~~	\mathrm{s.t}~ (\beta_{n}^{t})^{2}+(\beta_{n}^{r})^{2}=1,  \forall n \in \mathcal{N}\label{Maximization221}\\
			&~~~~~\quad\beta_{n}^{t}\ge 0, \beta_{n}^{r}\ge 0,~\forall n \in \mathcal{N}\label{Maximization2312}\\
			&~~~~~\quad|\theta_{n}^{t}|=|\theta_{n}^{r}|=1, ~\forall n \in \mathcal{N}
			\label{Maximization31} 
		\end{align}
	\end{subequations}
	
	For the sake of exposition, the feasible set of $ 	(\mathcal{P}2) $  can be described by the following  two sets: $ \Theta=\{\thetv\ |\ |\theta_{i}^{t}|=|\theta_{i}^{r}|=1,i=1,2,\ldots N\} $, and $ \mathcal{B}=\{\betv\ |\ (\beta_{i}^{t})^{2}+(\beta_{i}^{r})^{2}=1,\beta_{i}^{t}\geq0,\beta_{i}^{r}\geq0,i=1,2,\ldots N\} $. 
	
	The introduction of STAR-RIS induces further constraints to a conventional RIS, which has only a  unit-modulus constraint. Specifically, the first constraint, based on  the energy conservation law, includes the two types of passive beamforming, namely transmission and reflection beamforming, to be optimized, which are coupled with each other. Also, 
	we observe the non-convexity of problem  $ (\mathcal{P}2) $. However, the projection operators  of the sets $\Theta$ and $ \mathcal{B}$  can be done in closed form, which leads us to apply the  projected  gradient ascent method (PGAM) \cite[Ch. 2]{Bertsekas1999} for the  optimization of $\thetv$ and $\betv$

	We propose a PGAM consisting of the following iterations
	\begin{subequations}\label{mainiteration}\begin{align}
			\thetv^{n+1}&=P_{\Theta}(\thetv^{n}+\mu_{n}\nabla_{\thetv}f(\thetv^{n},\betv^{n})),\label{step1} \\ \betv^{n+1}&=P_{\mathcal{B}}(\betv^{n}+{\mu}_{n}\nabla_{\betv}f(\thetv^{n},\betv^{n})),\label{step2} \end{align}
	\end{subequations}
	where the superscript expresses the iteration count,  $\mu_n$ is the step size for both $\thetv$ and $\betv$, and  $P_{\Theta}(\cdot) $ and $ P_{\mathcal{B}}(\cdot) $ are the projections onto $ \Theta $ and $ \mathcal{B} $, respectively. According to this method, we move from the current iterate $(\thetv^{n},\betv^{n})$  along the gradient direction to increase the objective.

	Concerning the projection onto the set  $ \Theta  $, it is written  for a given $\thetv\in \mathbb{C}^{2N\times 1}$  in terms of the entrywise operation
	\begin{equation}
		P_{\Theta}(\thetv)=\thetv/|\thetv|=e^{j\angle\thetv}.
	\end{equation}
	
	In the case of  $P_{ \mathcal{B} }(\betv)$, we notice that the projection is complicated because of the constraint  $(\beta_{i}^{t})^{2}+(\beta_{i}^{r})^{2}=1,\beta_{i}^{t}\geq0,\beta_{i}^{r}\geq0$. To make the projection more efficient, we assume that $\beta_{i}^{t}$ and $\beta_{i}^{r}$ can take negative values without affecting  the optimality of the proposed solution. The reason is that the objective remains the same despite changing the sign of both $\beta_i^{u}$ and $\theta_{i}^{u}$, $u\in{t,r}$. By projecting $\beta_{i}^{t}$ and $\beta_{i}^{r}$ onto the entire unit circle,   we can obtain $P_{ \mathcal{B} }(\betv)$  as
	\begin{subequations}
		\begin{align}
			\left[\ensuremath{P_{\mathcal{B}}(}\boldsymbol{\beta})\right]_{i} & =\frac{{\beta}_{i}}{\sqrt{{\beta}_{i}^{2}+{\beta}_{i+N}^{2}}},i=1,2,\ldots,N\\
			\left[\ensuremath{P_{\mathcal{B}}(}\boldsymbol{\beta})\right]_{i+N} & =\frac{{\beta}_{i+N}}{\sqrt{{\beta}_{i}^{2}+{\beta}_{i+N}^{2}}}, i=1,2,\ldots,N.
		\end{align}
	\end{subequations}

	The convergence of the proposed PGAM depends on the selection of the step size in \eqref{step1} and \eqref{step2}. Given that it is difficult to find the Lipschitz constant of the corresponding gradient, which would provide the ideal step size, we resort to the application of the Armijo-Goldstein backtracking line search to obtain the step size at each iteration.

	The step size $  \mu_{n} $ in \eqref{mainiteration} can be obtained as $ \mu_{n} = L_{n}\kappa^{m_{n}} $, where $ m_{n} $ is the
	smallest nonnegative integer satisfying
	\begin{align}
		f(\thetv^{n+1},\betv^{n+1})\geq	Q_{L_{n}\kappa^{m_{n}}}(\thetv^{n}, \betv^{n};\thetv^{n+1},\betv^{n+1}),\label{qfunc}
	\end{align}
	where $ L_n>0 $, $ \kappa \in (0,1) $, and $ Q_{\mu}(\thetv, \betv;\bx,\by) $ is a quadratic approximation of $\mathrm{{SE}}$ as
	\begin{align}
		&	Q_{\mu}(\thetv, \betv;\bx,\by)=\mathrm{{SE}}+\langle	\nabla_{\thetv}\mathrm{{SE}},\bx-\thetv\rangle-\frac{1}{\mu}\|\bx-\thetv\|^{2}_{2}\nn\\
		&+\langle\nabla_{\betv}\mathrm{{SE}},\by-\betv\rangle-\frac{1}{\mu}\|\by-\betv\|^{2}_{2}.
	\end{align}   
	Note that \eqref{qfunc} can be performed by an iterative procedure. Also, in the proposed PGAM, we use the step size at iteration $n$ as the initial step size at iteration $n+1$. The method is summarized in Algorithm \ref{Algoa1}. 
	
	\begin{algorithm}[th]
		\caption{Projected Gradient Ascent Algorithm for the STAR-RIS Design\label{Algoa1}}
		\begin{algorithmic}[1]
			\STATE Input: $\thetv^{0},\betv^{0},\mu_{1}>0$, $\kappa\in(0,1)$
			\STATE $n\gets1$
			\REPEAT
			\REPEAT \label{ls:start}
			\STATE $\thetv^{n+1}=P_{\Theta}(\thetv^{n}+\mu_{n}\nabla_{\thetv}f(\thetv^{n},\betv^{n}))$
			\STATE $\betv^{n+1}=P_{B}(\betv^{n}+\mu_{n}\nabla_{\betv}f(\thetv^{n},\betv^{n}))$
			\IF{ $f(\thetv^{n+1},\betv^{n+1})\leq Q_{\mu_{n}}(\thetv^{n},\betv^{n};\thetv^{n+1},\thetv^{n+1})$}
			\STATE $\mu_{n}=\mu_{n}\kappa$
			\ENDIF
			\UNTIL{ $f(\thetv^{n+1},\betv^{n+1})>Q_{\mu_{n}}(\thetv^{n},\betv^{n};\thetv^{n+1},\thetv^{n+1})$}\label{ls:end}
			\STATE $\mu_{n+1}\leftarrow\mu_{n}$
			\STATE $n\leftarrow n+1$
			\UNTIL{ convergence}
			\STATE Output: $\thetv^{n+1},\betv^{n+1}$
		\end{algorithmic}
	\end{algorithm}

	\begin{proposition}\label{LemmaGradients}
		The complex gradients $ \nabla_{\thetv}\mathrm{{SE}} $ and  $\nabla_{\betv}\mathrm{{SE}} $ are obtained in closed forms as
		\begin{subequations}
			\begin{align}
				\nabla_{\thetv}\mathrm{{SE}} &=[\nabla_{\thetv^{t}}\mathrm{{SE}}^{\T}, \nabla_{\thetv^{r}}\mathrm{{SE}}^{\T}]^{\T},\\
				\nabla_{\thetv^{t}}\mathrm{{SE}}&=\frac{\tau_{\mathrm{c}}-\tau}{\tau_{\mathrm{c}}\log2}\sum_{k=1}^{K}\frac{	I_{k}\nabla_{\thetv^{t}}{S_{k}}-S_{k}	\nabla_{\thetv^{t}}{I_{k}}}{(1+\gamma_{k})I_{k}^{2}} ,\\
				\nabla_{\thetv^{r}}\mathrm{{SE}}&=\frac{\tau_{\mathrm{c}}-\tau}{\tau_{\mathrm{c}}\log2}\sum_{k=1}^{K}\frac{	I_{k}\nabla_{\thetv^{r}}{S_{k}}-S_{k}	\nabla_{\thetv^{r}}{I_{k}}}{(1+\gamma_{k})I_{k}^{2}}, 
			\end{align}
		\end{subequations}
		\begin{subequations}
			\begin{align}
				\nabla_{\thetv^{t}}S_{k}&=\begin{cases}
					\nu_{k}\diag\bigl(\mathbf{A}_{t}\diag(\boldsymbol{{\beta}}^{t})\bigr) & w_{k}=t\\
					0 & w_{k}=r
				\end{cases}\label{derivtheta_t}\\
				\nabla_{\thetv^{r}}S_{k}&=\begin{cases}
					\nu_{k}\diag\bigl(\mathbf{A}_{r}\diag(\boldsymbol{{\beta}}^{r})\bigr) & w_{k}=r\\
					0 & w_{k}=t
				\end{cases}\label{derivtheta_r}\\
				\nabla_{\thetv^{t}}I_k &=\diag\bigl(\tilde{\mathbf{A}}_{kt}\diag(\boldsymbol{{\beta}}^{t})\bigr)\label{derivtheta_t_Ik}\\
				\nabla_{\thetv^{r}}I_k &=\diag\bigl(\tilde{\mathbf{A}}_{kr}\diag(\boldsymbol{\beta}^{r})\bigr)\label{derivtheta_r_Ik}
			\end{align}
		\end{subequations}
		with $\bA_{w_k}= \bR_{\mathrm{RIS}} \bPhi_{w_k} \bR_{\mathrm{RIS}} $ for $w_k\in\{t,r\}$, $v_{k}=2\tilde{ \beta}_{BSk}\tr(\bR_{BSk})\tr(\mathbf{R}_{\mathrm{BS}}) $, 
		\begin{equation}
			\tilde{\mathbf{A}}_{ku}=\begin{cases}
				\bigl(\bar{\nu}_{k}+\sum\nolimits _{i\in\mathcal{K}_{u}}^{K}\tilde{\nu}_{ki}\bigr)\mathbf{A}_{u} & w_{k}=u\\
				\sum\nolimits _{i\in\mathcal{K}_{u}}^{K}\tilde{\nu}_{ki}\mathbf{A}_{u} & w_{k}\neq u,
			\end{cases}\label{A_tilde_general}
		\end{equation}
		$u\in\{t,r\}$, $\bar{\nu}_{k}=2\tilde{ \beta}_{BSk}\tr(\bR_{BSk}\mathbf{R}_{\mathrm{BS}})-2\tilde{ \beta}_{BSk}\tr(\bR_{BSk})\tr(\mathbf{R}_{\mathrm{BS}})+\frac{K}{\bar{\lambda} \rho ZL}\tilde{ \beta}_{RSk}\sum_{z=1}^{Z}\tr(\bar{\bu}_{z}\bar{\bu}_{z}^{\H}\bR_{\mathrm{R}})$, 
		$\tilde{\nu}_{ki}=\tilde{ \beta}_{BSk}\tr\bigl(\mathbf{R}_{\mathrm{BS}}(\bR_{BSi}+\bR_{BSk})\bigr)-\frac{K}{\bar{\lambda}^{2} \rho}\tilde{ \beta}_{BSk}\tr(\mathbf{R}_{\mathrm{BS}})(\frac{1}{ZL}\sum_{z=1}^{Z}\tr(\bar{\bu}_{z}\bar{\bu}_{z}^{\H}\bR_{RSk})+\sigma_{c}^{2})$.
		
		In a similar way, we obtain the  gradient $\nabla_{\betv}\mathrm{{SE}} $ as
		\begin{subequations}\label{eq:deriv:wholebeta}
			\begin{align}
				\nabla_{\betv}\mathrm{{SE}} &=[\nabla_{\betv^{t}}\mathrm{{SE}}^{\T}, \nabla_{\betv^{r}}\mathrm{{SE}}^{\T}]^{\T},\\
				\nabla_{\betv^{t}}\mathrm{{SE}}&=\frac{\tau_{\mathrm{c}}-\tau}{\tau_{\mathrm{c}}\log2}\sum_{k=1}^{K}\frac{	I_{k}\nabla_{\betv^{t}}{S_{k}}-S_{k}	\nabla_{\betv^{t}}{I_{k}}}{(1+\gamma_{k})I_{k}^{2}}, \label{gradbetat:final}\\
				\nabla_{\betv^{r}}\mathrm{{SE}}&=\frac{\tau_{\mathrm{c}}-\tau}{\tau_{\mathrm{c}}\log2}\sum_{k=1}^{K}\frac{	I_{k}\nabla_{\betv^{r}}{S_{k}}-S_{k}	\nabla_{\betv^{r}}{I_{k}}}{(1+\gamma_{k})I_{k}^{2}} ,
			\end{align}
		\end{subequations} 
		where
		\begin{subequations}
			\begin{align}
				\nabla_{\betv^{t}}S_{k}&=\begin{cases}
					2\nu_k\Re\bigl\{\diag\bigl(\mathbf{A}_{k}\herm\diag(\btheta^{t})\bigr)\bigr\} & w_{k}=t\\
					0 & w_{k}=r
				\end{cases}\label{derivbeta_t}\\
				\nabla_{\betv^{r}}S_{k}&=\begin{cases}
					2\nu_k\Re\bigl\{\diag\bigl(\mathbf{A}_{k}\herm\diag(\btheta^{r})\bigr)\bigr\} & w_{k}=r\\
					0 & w_{k}=t
				\end{cases}\label{derivbeta_r}\\
				\nabla_{\betv^{t}}I_k &=2\Re\bigl\{\diag\bigl(\tilde{\mathbf{A}}_{kt}\herm\diag(\boldsymbol{\btheta}^{t})\bigr)\bigr\}\\
				\nabla_{\betv^{r}}I_k &=2\Re\bigl\{\diag\bigl(\tilde{\mathbf{A}}_{kr}\herm\diag(\boldsymbol{\btheta}^{r})\bigr)\bigr\}.
			\end{align}
		\end{subequations}
		
	\end{proposition}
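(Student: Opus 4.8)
The plan is to collapse the whole dependence of the objective on the surface variables onto two real scalars and then apply the chain rule. Observe first that $\bPhi_{w_k}$ takes only the two values $\bPhi_t,\bPhi_r$, and that every statistic in $S_k,I_k$ depends on the amplitudes and phases only through the two real scalars $g_u\triangleq\tr(\bR_{\mathrm{RIS}}\bPhi_u\bR_{\mathrm{RIS}}\bPhi_u^{\H})$, $u\in\{t,r\}$, because $\bR_{BSk}=(\tilde{\beta}_{Bk}+\tilde{\beta}_{BSk}g_{w_k})\bR_{\mathrm{BS}}$, $\bR_{RSk}=(\tilde{\beta}_{Rk}+\tilde{\beta}_{RSk}g_{w_k})\bR_{\mathrm{R}}$, and, by \eqref{normalization1}, $\bar{\lambda}=1/\sum_i\tr(\bR_{BSi})$ is itself a function of $g_t,g_r$. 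Thus $\mathrm{SE}$ is a function of $g_t,g_r$ alone. Carrying the pre-log constant and differentiating the logarithm, then applying the quotient rule to $\gamma_k=S_k/I_k$, reproduces the outer factor $\frac{\tau_{\mathrm{c}}-\tau}{\tau_{\mathrm{c}}\log2}\sum_k\frac{I_k\nabla S_k-S_k\nabla I_k}{(1+\gamma_k)I_k^{2}}$ for each of $\nabla_{\thetv^u},\nabla_{\betv^u}$, so only $\nabla S_k$ and $\nabla I_k$ remain to be found.

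The core computation is the gradient of $g_u$ itself. Using that $\bR_{\mathrm{RIS}}$ is Hermitian and $\bPhi_u=\diag(\psi_1^u,\dots,\psi_N^u)$ with $\psi_n^u=\beta_n^u\theta_n^u$, one expands $g_u=\sum_{m,n}|[\bR_{\mathrm{RIS}}]_{mn}|^{2}\psi_m^u(\psi_n^u)^{*}$. Adopting the ascent-gradient convention $\nabla_{\theta}:=\partial/\partial\theta^{*}$ for the (complex) unit-modulus phases, only the conjugate factor survives and $[\nabla_{\thetv^u}g_u]_n=\beta_n^u\sum_m|[\bR_{\mathrm{RIS}}]_{nm}|^{2}\psi_m^u=[\diag(\bA_u\diag(\betv^u))]_n$, where $\bA_u=\bR_{\mathrm{RIS}}\bPhi_u\bR_{\mathrm{RIS}}$ and $(\bA_u)_{nn}=\sum_m|[\bR_{\mathrm{RIS}}]_{nm}|^{2}\psi_m^u$. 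For the real amplitudes both $\psi_n^u$ and $(\psi_n^u)^{*}$ contribute, giving $[\nabla_{\betv^u}g_u]_n=2\Re\{[\diag(\bA_u^{\H}\diag(\thetv^u))]_n\}$. These are exactly the vectors multiplying the scalar coefficients in the statement.

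It remains to chain through $S_k$ and $I_k$. Since $S_k=\tr^{2}(\bR_{BSk})$ depends on $g_{w_k}$ only, $\partial S_k/\partial g_u=2\tr(\bR_{BSk})\tilde{\beta}_{BSk}\tr(\bR_{\mathrm{BS}})$ when $w_k=u$ and $0$ otherwise; this scalar is $\nu_k$, yielding the two-case gradients \eqref{derivtheta_t}--\eqref{derivbeta_r}. For $I_k$ I differentiate term by term in $g_u$: the self-terms $\tr(\bR_{BSk}^{2})-\tr^{2}(\bR_{BSk})$ and the direct $\bR_{RSk}$-dependence of the radar-interference term act only when $w_k=u$ and combine into $\bar{\nu}_k$; the cross-correlation terms $\sum_{i\ne k}\tr(\bR_{BSk}\bR_{BSi})$ together with the dependence of $\bar{\lambda}$ on every $g_{w_i}$ (where $\partial\bar{\lambda}/\partial g_u=-\bar{\lambda}^{2}\,\partial(1/\bar{\lambda})/\partial g_u$ produces the $K/(\bar{\lambda}^{2}\rho)$ factor) furnish the sum $\sum_{i\in\mathcal{K}_{u}}\tilde{\nu}_{ki}$. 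Hence $\partial I_k/\partial g_u=\bar{\nu}_k\mathbf{1}[w_k=u]+\sum_{i\in\mathcal{K}_{u}}\tilde{\nu}_{ki}$, and multiplying by $\nabla_{\thetv^u}g_u$ and $\nabla_{\betv^u}g_u$ delivers $\nabla_{\thetv^u}I_k=\diag(\tilde{\bA}_{ku}\diag(\betv^u))$ and its amplitude analogue, with $\tilde{\bA}_{ku}=(\partial I_k/\partial g_u)\bA_u$ as in \eqref{A_tilde_general}.

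The delicate point is the $\bar{\lambda}$ coupling: because $\bar{\lambda}$ mixes $g_t$ and $g_r$ across all UEs, $I_k$ retains a nonzero $g_u$-derivative even when $w_k\ne u$, which is precisely the $w_k\ne u$ branch of \eqref{A_tilde_general}; keeping the two powers of $\bar{\lambda}$ and the correct path-loss indices straight is the error-prone bookkeeping. The only non-routine analytic step is the Wirtinger differentiation of $g_u$, where the conjugate derivative must be used to obtain the ascent direction and the real amplitudes handled separately so as to produce the $2\Re\{\cdot\}$ and the $\bA_u^{\H}$ appearing in the amplitude gradients; everything else is bounded, differentiable algebra.
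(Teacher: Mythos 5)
Your overall route coincides with the paper's Appendix C: Wirtinger-type differentiation, the quotient rule for the outer factor, and a chain rule on the surface variables. Your one organizational twist---factoring everything through the scalars $g_u=\tr(\bR_{\mathrm{RIS}}\bPhi_u\bR_{\mathrm{RIS}}\bPhi_u^{\H})$---is implicit in the paper as well, since its matrix differential $d(\bR_{BSk})=\tilde{\beta}_{BSk}\bR_{\mathrm{BS}}\tr\bigl(\bA_t^{\H}d(\bPhi_t)+\bA_td(\bPhi_t^{\H})\bigr)$ is exactly $\tilde{\beta}_{BSk}\bR_{\mathrm{BS}}\,d(g_t)$; your version just makes the bookkeeping transparent. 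Your entrywise evaluations $\nabla_{\thetv^u}g_u=\diag\bigl(\bA_u\diag(\betv^u)\bigr)$ and $\nabla_{\betv^u}g_u=2\Re\bigl\{\diag\bigl(\bA_u^{\H}\diag(\thetv^u)\bigr)\bigr\}$ are correct, as is $\partial S_k/\partial g_u=\nu_k$ when $w_k=u$ and zero otherwise, so the $S_k$ gradients, the $2\Re\{\cdot\}$ amplitude structure, and the two-case form of $\tilde{\bA}_{ku}$ all come out as in the statement.

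The gap is in the $I_k$ step, precisely at the $\bar\lambda$ coupling you yourself call ``error-prone bookkeeping.'' Carried out faithfully, your chain rule does not produce the stated $\tilde{\nu}_{ki}$. Writing the last two terms of $I_k$ as $\frac{K}{\bar\lambda\rho}\bigl(\frac{1}{ZL}\sum_{z}\tr(\bar{\bu}_z\bar{\bu}_z^{\H}\bR_{RSk})+\sigma_c^2\bigr)$ with $1/\bar\lambda=\sum_i\tr(\bR_{BSi})$, one gets
\begin{align*}
\frac{\partial}{\partial g_u}\Bigl(\frac{K}{\bar\lambda\rho}\Bigr)
=-\frac{K}{\bar\lambda^{2}\rho}\,\frac{\partial\bar\lambda}{\partial g_u}
=\frac{K}{\rho}\,\frac{\partial(1/\bar\lambda)}{\partial g_u}
=\frac{K}{\rho}\tr(\bR_{\mathrm{BS}})\sum_{i\in\mathcal{K}_u}\tilde{\beta}_{BSi},
\end{align*}
so the $\bar\lambda^{2}$ cancels, the sign is positive, and the per-user weight is $\tilde{\beta}_{BSi}$, not $\tilde{\beta}_{BSk}$. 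The stated second term of $\tilde{\nu}_{ki}$, namely $-\frac{K}{\bar\lambda^{2}\rho}\tilde{\beta}_{BSk}\tr(\bR_{\mathrm{BS}})(\cdots)$, is recovered only if one substitutes $\sum_i\tr(d(\bR_{BSi}))$ for $d\bar\lambda$---but that quantity is in fact $d(1/\bar\lambda)$---which is exactly the substitution made in the paper's own expression for $d(I_k)$. A similar index mismatch occurs in the cross terms: differentiating $\sum_{i\ne k}\tr(\bR_{BSk}\bR_{BSi})$ in $g_u$ gives $\tilde{\beta}_{BSk}\sum_{i\ne k}\tr(\bR_{\mathrm{BS}}\bR_{BSi})$ when $w_k=u$ plus $\sum_{i\ne k,\,w_i=u}\tilde{\beta}_{BSi}\tr(\bR_{\mathrm{BS}}\bR_{BSk})$, not the symmetric $\tilde{\beta}_{BSk}\tr\bigl(\bR_{\mathrm{BS}}(\bR_{BSi}+\bR_{BSk})\bigr)$ of the statement. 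So your assertion that the chain rule ``furnishes $\sum_{i\in\mathcal{K}_u}\tilde{\nu}_{ki}$'' is not established: either you silently repeat the paper's substitution (and should say so), or your correct calculus yields coefficients differing in sign, in the $\bar\lambda^{2}$ factor, and in the path-loss index from those in the proposition. This points to a slip in the stated coefficients rather than to a flaw in your method, but as written this step of your proposal does not prove the statement.
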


	\begin{proof}
		Please see Appendix~\ref{lem2}.	
	\end{proof}
	
	\subsubsection*{Complexity Analysis of Algorithm \ref{Algoa1}} Herein, we focus on the complexity for each iteration of Algorithm \ref{Algoa1}, which  includes the objective and its gradient value. In the case of  the objective value, we observe that $ \bR_{BSk} $ and $ \bR_{RSk} $ require   $O(M^{2}+N^2)$ complex multiplications since $O(M^{2}+N^2)$ multiplications are required for obtaining  $\diag(\bA_{w_k})=\diag(\bR_{\mathrm{RIS}} \bPhi_{w_{k}} \bR_{\mathrm{RIS}})$  given that $  \bPhi_{w_{k}} $ is diagonal, and $O(N^2)$ additional  multiplications are required to evaluate $\tr(\bA_{w_k} \bPhi_{w_{k}}^{\H})\bR_{\mathrm{BS}}$.   The complexity of $ \tr(\bar{\bu}_{z}\bar{\bu}_{z}^{\H}\bR_{RSk}) $ is $O(Q^2)$. Hence, the  overall complexity of the objective is $O(K(N^2+M^2+Q^2))$. Regarding the computation of the gradients $\nabla_{\thetv}f(\thetv,\betv)$ and $\nabla_{\betv}f(\thetv,\betv)$, the traces require $O(M^{2}+N^{2})$ complex multiplications, while $\mathbf{A}_{t}\diag(\boldsymbol{{\beta}}^{t})$ requires $O(N)$ multiplications since both matrices are diagonal. Thus, the  complexity of the computation of the gradients for each iteration is $O(K(N^2+M^2+Q^2))$, which is also the complexity of the objective. 
	
\textcolor{black}{	Although this work does not focus on the analysis under discrete phase shifts, herein, we only propose a quantization scheme to supplement Algorithm \ref{Algoa1} in the case, where each element of the STAR-RIS can have a finite number of discrete values. Specifically, we denote $ b $  the
	number of bits that correspond to  the phase shift levels $ Q $, where 	$ Q = 2^{b} $. For the sake of simplicity, we assume that the interval $ [0, 2\pi) $ is uniformly quantized as
	\begin{align}
		\mathcal{F}=\{0, \Delta \theta, \ldots, (Q-1)\Delta \theta\},
	\end{align}	
where $ \Delta \theta=2\pi/ Q $. The solution of 	$ (\mathcal{P}2) $  under the constraint $ \theta_{n}^{t}, \theta_{n}^{r} \in \mathcal{F}$, $ \forall n $ results by quantizing the diagonal entries of $ \bTheta $, which is the solution of Algorithm \eqref{Algoa1} to the nearest discrete values in $ \mathcal{F} $. The proposed solution with discrete phase shifts is mentioned as Algorithm $ 1 $-Quantized, and is depicted in numerical results.}
	\subsection{Problem Formulation for Radar}\label{ProblemFormulation3}
	Herein, we focus on the optimal radar beamforming design with given STAR-RIS passive beamforming vectors. In particular, the maximization problem in $ 	(\mathcal{P}1) $ concerns the design of $ \bar{\bu}_{z} $ and $ \bw_{z} $ to avoid interference from the radar to the
	communication receiver. The corresponding optimization problem is formulated as
	\begin{subequations}
		\begin{align}
			(\mathcal{P}3)~~&\max_{\bar{\bu}_{z},\bw_{z}} 	\;		\mathrm{{SE}}
			\label{Maximization3} \\
			&~~	\mathrm{s.t}~~\;\!	\gamma_{z}\ge \gamma^{r}	\label{Maximization2313} \\
			&\;~~~~~~\!\textcolor{black}{\sum_{z=1}^{Z}\|\bar{\bu}_{z}\|^{2}}\le P_{\mathrm{max}},\label{Maximization213}	
		\end{align}
	\end{subequations}
	
	The solution to this problem is obtained  by the following proposition.
	\begin{proposition}\label{Proposition1}
		Under given PB regarding the STAR-RIS, the optimal solution to problem $ 	(\mathcal{P}3) $ is given by
		\begin{align}
			\bw_{z}^{\star}&=(\sigma^{2}_{r}\Id_{M}+\bA)^{-1} \al_{z}\ba(\bar{\theta}_{z})\ba^{\T}(\bar{\theta}_{z} ) \bar{\bu}_{z},\\
			\bar{\bu}_{z}^{\star}&\!=\!\left\{
			\begin{array}{ll}\!\!\!
				\sqrt{\bar{\gamma}^{r}}\ba^{*}(\bar{\theta}_{z})\\	\!\!\!-\frac{\ba^{\T}(\bar{\theta}_{z})\bR_{RSk}\ba^{*}(\bar{\theta}_{z})\bee_{z}}{\bee_{z}^{\H}\bR_{RSk}\bee_{z}+\bar{\lambda}^{\star}}\sqrt{ \bar{\gamma}^{r}},&  \!\!\!\mathrm{if}~\bee_{z}^{\H}\bR_{RSk}\bee_{z}\ne 0,\\
				\sqrt{\bar{\gamma}^{r}}\ba^{*}(\bar{\theta}_{z}), & \mathrm{if}~\bee_{z}^{\H}\bR_{RSk}\bee_{z}=0,
			\end{array} 
			\right. 
		\end{align}
		where $ \bar{\gamma}^{r}=\frac{\gamma^{r}}{|\al_{z}|^{2}}\Big(1-\frac{\ba(\bar{\theta}_{z})^{\H}\bA \ba(\bar{\theta}_{z})}{\sigma^{2}_{r}+\bA}\Big)^{-1} $, $ \bee_{z}=\frac{\bh_{RSk}-(\ba^{\T}(\bar{\theta}_{z})\bh_{RSk})\ba^{*}(\bar{\theta}_{z})}{\|\bh_{RSk}-(\ba^{\T}(\bar{\theta}_{z})\bh_{RSk})\ba^{*}(\bar{\theta}_{z})\|} $, and $ \bar{\lambda}^{\star} $ is the optimal
		Lagrange multiplier for constraint \eqref{Maximization213} that satisfies $ \bar{\lambda}^{\star}\left(\sum_{z=1}^{Z}\|\bar{\bu}_{z}\|^{2}- P_{\mathrm{max}}\right)=0 $.
		
	\end{proposition}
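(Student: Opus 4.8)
The plan is to exploit the fact that the receive beamformer $\bw_z$ enters $(\mathcal{P}3)$ only through the radar-SINR constraint \eqref{Maximization2313}, whereas the transmit beamformer $\bar{\bu}_z$ enters both that constraint and the objective. I would therefore optimize in two nested stages: first choose $\bw_z$ to make the SINR constraint as loose as possible for a given $\bar{\bu}_z$, and then choose $\bar{\bu}_z$ to minimize the radar-to-UE interference subject to the resulting reduced constraint and the power budget \eqref{Maximization213}. For the first stage I would read off from \eqref{SINR10} that, with $\bar{\bu}_z$ fixed, $\gamma_z$ is a generalized Rayleigh quotient $|\bw_z^\H\bg_z|^2/\bigl(\bw_z^\H(\sigma_r^2\Id+\bA)\bw_z\bigr)$, where $\bg_z=\al_z\ba(\bar{\theta}_z)\ba^\T(\bar{\theta}_z)\bar{\bu}_z$ and $\bA$ is the effective interference covariance identified from the denominator of \eqref{SINR10}. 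The maximizing filter is the standard MVDR/Wiener solution $\bw_z^\star=(\sigma_r^2\Id+\bA)^{-1}\bg_z$, which is exactly the stated $\bw_z^\star$, and it attains $\gamma_z^{\max}=\bg_z^\H(\sigma_r^2\Id+\bA)^{-1}\bg_z$.

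Next I would reduce both the constraint and the objective to functions of $\bar{\bu}_z$ alone. Writing $\bg_z=\al_z\bigl(\ba^\T(\bar{\theta}_z)\bar{\bu}_z\bigr)\ba(\bar{\theta}_z)$ and pulling out the scalar gives $\gamma_z^{\max}=|\al_z|^2\,|\ba^\T(\bar{\theta}_z)\bar{\bu}_z|^2\,\ba^\H(\bar{\theta}_z)(\sigma_r^2\Id+\bA)^{-1}\ba(\bar{\theta}_z)$, so $\gamma_z\ge\gamma^r$ collapses to the single quadratic floor $|\ba^\T(\bar{\theta}_z)\bar{\bu}_z|^2\ge\bar{\gamma}^r$; the compact form of $\bar{\gamma}^r$ in the statement then follows by simplifying $\ba^\H(\sigma_r^2\Id+\bA)^{-1}\ba$ via the Sherman--Morrison identity. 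For the objective I would observe that $\bar{\bu}_z$ affects $\mathrm{SE}$ only through the radar-to-UE interference $\sum_z\tr(\bar{\bu}_z\bar{\bu}_z^\H\bR_{RSk})$ in $I_k$ (see \eqref{Ik}), and that $\mathrm{SE}$ is strictly decreasing in this quantity; hence $(\mathcal{P}3)$ is equivalent to minimizing $\sum_z\bar{\bu}_z^\H\bR_{RSk}\bar{\bu}_z$ subject to the per-direction floors and the sum-power budget, with each detection direction $z$ coupling into the UE $k$ whose link $\bR_{RSk}$ appears, consistent with the model in which any UE may be a target.

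I would then dualize the coupling power constraint with a multiplier $\bar{\lambda}^\star\ge0$, which makes the Lagrangian separable across directions and leaves, for each $z$, the subproblem $\min_{\bar{\bu}_z}\bar{\bu}_z^\H(\bR_{RSk}+\bar{\lambda}^\star\Id)\bar{\bu}_z$ subject to $|\ba^\T(\bar{\theta}_z)\bar{\bu}_z|^2\ge\bar{\gamma}^r$. Since the objective grows with $\|\bar{\bu}_z\|$, the floor is active at the optimum, so after fixing the phase I may set $\ba^\T(\bar{\theta}_z)\bar{\bu}_z=\sqrt{\bar{\gamma}^r}$. Decomposing $\bar{\bu}_z$ into the component along $\ba^*(\bar{\theta}_z)$, whose magnitude is pinned by the active constraint, plus an orthogonal correction, the interference-minimizing correction lies along $\bee_z$, i.e.\ the part of the interference direction orthogonal to $\ba^*(\bar{\theta}_z)$; this produces precisely the two-case expression for $\bar{\bu}_z^\star$, with the correction term vanishing exactly when $\bee_z^\H\bR_{RSk}\bee_z=0$. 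Finally, I would fix $\bar{\lambda}^\star$ by complementary slackness $\bar{\lambda}^\star\bigl(\sum_z\|\bar{\bu}_z\|^2-P_{\mathrm{max}}\bigr)=0$: if the $\bar{\lambda}^\star=0$ minimizer already respects the budget it is optimal, and otherwise $\bar{\lambda}^\star>0$ is the unique value driving $\sum_z\|\bar{\bu}_z^\star\|^2=P_{\mathrm{max}}$, obtainable by a one-dimensional search using monotonicity of the norm in $\bar{\lambda}^\star$.

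The main obstacle is the transmit-beamformer subproblem, which is a non-convex QCQP because of the lower-bounded quadratic constraint $|\ba^\T(\bar{\theta}_z)\bar{\bu}_z|^2\ge\bar{\gamma}^r$. The delicate points are justifying that phase-fixing incurs no loss of optimality, arguing that the optimal orthogonal perturbation collapses onto the single direction $\bee_z$, and invoking strong duality (or a direct KKT argument) to legitimize decoupling the sum-power constraint through $\bar{\lambda}^\star$. Once these are in place, the closed forms for $\bw_z^\star$ and $\bar{\bu}_z^\star$ follow from the elementary manipulations above.
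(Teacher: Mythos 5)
Your proposal is correct and follows essentially the same route as the paper: first obtain $\bw_{z}^{\star}$ as the maximizer of the (generalized) Rayleigh quotient, reduce the radar SINR constraint to $|\ba^{\T}(\bar{\theta}_{z})\bar{\bu}_{z}|^{2}\ge\bar{\gamma}^{r}$ via Sherman--Morrison, decompose $\bar{\bu}_{z}$ into components along $\ba^{*}(\bar{\theta}_{z})$, $\bee_{z}$, and an orthogonal remainder, and solve the resulting per-direction problem by KKT conditions with complementary slackness fixing $\bar{\lambda}^{\star}$. The only cosmetic difference is that you dualize just the power constraint and keep the SINR floor explicit (correctly reading the second stage as interference \emph{minimization}), whereas the paper forms the Lagrangian of its reduced problem $(\mathcal{P}5)$ directly; the computations coincide.
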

	\proof The proof  is provided in Appendix~\ref{Prop1}.\endproof
	
\begin{remark}		
	\textcolor{black}{Proposition \ref{Proposition1} provides the design of the optimal radar beamforming vector with a given PB. Specifically, the vectors $ \ba^{*}(\bar{\theta}_{z}) $ and $ \bee_{z} $ correspond together to the radar beamforming for direction $ \bar{\theta}_{z} $, where $ \bee_{z} $ is orthogonal to  $ \ba^{*}(\bar{\theta}_{z}) $ and is related to  the interference channel vector $ \bh_{RSk} $. The powers allocated to $ \ba^{*}(\bar{\theta}_{z}) $ and $ \bee_{z} $ are used for controlling the radar SINR and  for reducing the interference to UE $ k $ (communication receiver). Thus, when the radar transmit power is enough large, it can be allocated to   $ \ba^{*}(\bar{\theta}_{z}) $ to satisfy the radar SINR constraint and to   $ \bee_{z} $ to circumvent the interference to the communication receiver being UE $ k $, which means that the communication SINR is not affected by the radar system. }
	\end{remark}

	Hence, the maximum communication SINR  is given by \eqref{maxsinr}, which reveals the relationship between the communication SINR and the radar power budget via the following proposition.
	\begin{figure*}
		\begin{align}
			\gamma_{k}=	\frac{\tr^{2}\left(\bR_{BSk}\right)}{\tr(\bR_{BSk}^{2})-\tr^{2}(\bR_{BSk})+		\sum_{i\ne k}\tr\!\left(\bR_{BSk}\bR_{BSi} \right)+\frac{K}{\lambda \rho ZL}\sum_{z=1}^{Z}\tr(\bar{\bu}_{z}\bar{\bu}_{z}^{\H}\bR_{RSk})+\frac{\sigma_{c}^{2}K}{\bar{\lambda} \rho}}\label{maxsinr}.
		\end{align}
		\hrulefill
	\end{figure*}
	\begin{proposition}\label{Proposition2}
		The communication SINR is increased with the radar power budget since the interference from the radar to UE $ k$ is decreased as the power increases. Specifically, in the case of a sufficiently large power budget, the interference becomes zero, and the corresponding SINR becomes
		\begin{align}
			\gamma_{k}\!=\!	\frac{\tr^{2}\left(\bR_{BSk}\right)}{\tr(\bR_{BSk}^{2})\!-\!\tr^{2}(\bR_{BSk})\!+\!		\sum_{i\ne k}\tr\!\left(\bR_{BSk}\bR_{BSi} \right)\!+\!\frac{\sigma_{c}^{2}K}{\bar{\lambda} \rho}}\label{maxsinr1}.
		\end{align} 
	\end{proposition}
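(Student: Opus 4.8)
The plan is to isolate the single quantity in the SINR of Theorem~\ref{Theorem2} that depends on the radar power budget and to track it as $P_{\mathrm{max}}$ grows. From \eqref{SINR4}--\eqref{Ik} the signal term $S_k=\tr^2(\bR_{BSk})$ and the terms $\tr(\bR_{BSk}^2)$, $\tr^2(\bR_{BSk})$, $\sum_{i\ne k}\tr(\bR_{BSk}\bR_{BSi})$ and $\frac{\sigma_c^2 K}{\bar{\lambda}\rho}$ in $I_k$ are all independent of the radar beamforming $\{\bar{\bu}_z\}$. The sole coupling to the radar is the term $J_k\triangleq\frac{K}{\bar{\lambda}\rho ZL}\sum_{z=1}^{Z}\tr(\bar{\bu}_z\bar{\bu}_z^{\H}\bR_{RSk})\ge 0$, which sits additively in the denominator $I_k$. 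Hence $\gamma_k=S_k/I_k$ is a decreasing function of $J_k$, so establishing the proposition reduces to two facts about the optimal $\{\bar{\bu}_z^{\star}\}$ of Proposition~\ref{Proposition1}: that $J_k$ is non-increasing in $P_{\mathrm{max}}$, and that it reaches $0$ for $P_{\mathrm{max}}$ large enough. The monotonicity I would argue through the feasible set rather than by substitution: raising $P_{\mathrm{max}}$ only enlarges the admissible set $\{\{\bar{\bu}_z\}:\sum_z\|\bar{\bu}_z\|^2\le P_{\mathrm{max}},\,\gamma_z\ge\gamma^r\}$ over which $J_k$ is minimised in $(\mathcal{P}3)$, so the minimal attainable interference is non-increasing, $I_k$ is non-increasing, and $\gamma_k$ is non-decreasing.

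For the vanishing limit I would exploit the orthogonal decomposition supplied by Proposition~\ref{Proposition1}: $\bar{\bu}_z^{\star}$ splits into a component along the steering vector $\ba^*(\bar{\theta}_z)$, whose magnitude $\sqrt{\bar{\gamma}^r}$ is fixed solely by the radar constraint $\gamma_z=\gamma^r$, and a component along $\bee_z$, which is by construction orthogonal to $\ba^*(\bar{\theta}_z)$ and is steered entirely by the interference channel $\bh_{RSk}$. Because $\bee_z\perp\ba^*(\bar{\theta}_z)$, loading power onto $\bee_z$ leaves the radar effective channel $\ba^{\T}(\bar{\theta}_z)\bar{\bu}_z$, and hence $\gamma_z$, untouched, so radar detection and interference cancellation do not compete for the same spatial degree of freedom. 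The multiplier $\bar{\lambda}^{\star}$ then plays the role of a water level that caps the power spent on $\bee_z$: by complementary slackness, once $P_{\mathrm{max}}$ exceeds the power needed to jointly meet $\gamma_z=\gamma^r$ and cancel the radar-to-UE interference, the power constraint becomes inactive and $\bar{\lambda}^{\star}=0$. Setting $\bar{\lambda}^{\star}=0$ in $\bar{\bu}_z^{\star}$ makes the $\bee_z$-coefficient attain its unconstrained interference-minimising value, so $\tr(\bar{\bu}_z^{\star}\bar{\bu}_z^{\star\H}\bR_{RSk})=0$ for every $z$ and $J_k=0$. Deleting this term from \eqref{Ik} and re-forming $\gamma_k=S_k/I_k$ reproduces exactly \eqref{maxsinr1}.

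The main obstacle is this final step: one must verify that the interference is driven to \emph{exactly} zero at a \emph{finite} power threshold, not merely reduced asymptotically. The equality hinges on the geometric fact encoded in the $\bee_z$ branch of Proposition~\ref{Proposition1}, namely that the interference-cancelling direction lies in the orthogonal complement of the steering direction; one must also confirm that $\bar{\lambda}^{\star}$ is genuinely decreasing in $P_{\mathrm{max}}$ and hits zero at a finite level, which follows from the convexity of $(\mathcal{P}3)$ in $\{\bar{\bu}_z\}$ with $\bw_z$ held at its optimum together with the standard monotonicity of the optimal dual variable in the constraint bound.
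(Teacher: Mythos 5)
Your proposal is correct and shares the backbone of the paper's own proof: both identify the term $J_k=\frac{K}{\bar{\lambda}\rho ZL}\sum_{z=1}^{Z}\tr(\bar{\bu}_z\bar{\bu}_z^{\H}\bR_{RSk})$ as the only radar-dependent part of \eqref{Ik}, both lean on the $\ba^{*}(\bar{\theta}_z)$/$\bee_z$ decomposition and multiplier $\bar{\lambda}^{\star}$ of Proposition~\ref{Proposition1}, both use complementary slackness to kill the interference at large $P_{\mathrm{max}}$, and both then obtain \eqref{maxsinr1} by deleting $J_k$ from \eqref{Ik}. Where you deviate, your sub-arguments are sound and arguably cleaner. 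For monotonicity, the paper substitutes the optimal coefficients into $J_k$ to get the closed form \eqref{Maximization6}, observes that it is increasing in $\bar{\lambda}^{\star}$, and reads off from the rewritten slackness condition \eqref{Maximization7} that $\bar{\lambda}^{\star}$ must decrease as $P_{\mathrm{max}}$ grows; your feasible-set-nesting argument reaches the same conclusion without needing either expression. For the vanishing step, the paper argues asymptotically ($P_{\mathrm{max}}\to\infty$ forces $\bar{\lambda}^{\star}\to 0$, so the interference \emph{tends} to zero), whereas you pin down a finite threshold: the $\bar{\lambda}^{\star}=0$ solution of Proposition~\ref{Proposition1} consumes the finite power $\sum_{z}\bar{\gamma}^{r}\bigl(1+\bigl(\ba^{\T}(\bar{\theta}_z)\bR_{RSk}\ba^{*}(\bar{\theta}_z)/\bee_z^{\H}\bR_{RSk}\bee_z\bigr)^{2}\bigr)$, so once $P_{\mathrm{max}}$ exceeds this value the power constraint is inactive, $\bar{\lambda}^{\star}=0$ exactly, and \eqref{Maximization6} collapses to $|1-1|^{2}=0$; this matches the proposition's phrase ``sufficiently large power budget'' more faithfully than the paper's limit argument, and it also disposes of what you called the main obstacle, since exactness follows directly from the paper's own interference expression rather than from any extra geometric fact. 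One caveat: your closing appeal to ``convexity of $(\mathcal{P}3)$'' and monotonicity of the optimal dual variable is both unnecessary and shaky---the reduced constraint $|\eta_{z,1}|^{2}\ge\bar{\gamma}^{r}$ is not convex---but the finite-power observation above already does the job, so the gap is cosmetic rather than substantive.
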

	\proof The proof  is provided in Appendix~\ref{Prop2}.\endproof

		\begin{algorithm}[th]
		\caption{Joint STAR-RIS and Radar Design\label{Algoa3}}
		\begin{algorithmic}[1]
			\STATE Input: $\thetv,\betv$
								\REPEAT \label{ls:start}
			\STATE Update $ \thetv $, $ \betv $ by Proposition 1 in Section IV.A
			\STATE Update  $\bw_{z}$, $ \bar{\bu}_{z}$ by Proposition 2 in  Section IV.B
										\UNTIL{ convergence}
			\STATE Output: $\thetv^{n+1},\betv^{n+1}$, $\bw_{z}^{\star}$, $ \bar{\bu}_{z}^{\star} $.
		\end{algorithmic}
	\end{algorithm}

	\section{Numerical Results}\label{Numerical}
	In this section, we elaborate on the numerical results of the  STAR-RIS-assisted communication radar coexistence system  in terms of analytical results and Monte-Carlo (MC) simulations with $ 10^{3} $ independent channel realizations. 
	\subsection{Simulation Setup}
	The  setup includes a STAR-RIS deployed with a UPA of $ N=64 $ elements that assist the communication between a BS and $ K = 4 $ UEs. The BS is equipped with a uniform linear array (ULA) of $ M =64$ antennas. The  numbers of transmit and receive antennas for the radar are $ Q=12 $. The $xy-$coordinates of the BS and RIS are given as $(x_B,~ y_B) = (0,~0)$ and $(x_S,~ y_S)=(50,~ 10)$, respectively, all in meter units.  The radar is located at  $(x_R,~ y_R)=(30,~ D)$, where $ D=20 $. Also, UEs in $r$ region are located on a straight line between $(x_S-\frac{1}{2}d_0,~y_S-\frac{1}{2}d_0)$ and $(x_S+\frac{1}{2}d_0,~y_S-\frac{1}{2}d_0)$ with equal distances between each two adjacent users, and $d_0 = 20$~m in our simulations. In a similar way, UEs in the $t$ region are located between $(x_S-\frac{1}{2}d_0,~y_S+\frac{1}{2}d_0)$ and $(x_S+\frac{1}{2}d_0,~y_S+\frac{1}{2}d_0)$. \textcolor{black}{Although the simulation setup regarding the  division of UEs into two lines looks simplistic, the line could correspond to a pavement or a road, which is a realistic real-world application.} The dimensions of each RIS element are $ d_{\mathrm{H}}\!=\!d_{\mathrm{V}}\!=\!\lambda/4 $. We consider a system operating
	at a carrier frequency of $ 6~\mathrm{GHz} $.  Moreover, we consider  distance-based path-loss, i.e., the channel gain of a given link $j$ is $ A d_j^{-\alpha_{j}}$, where $A$ is the area of each reflecting element at the RIS, and $\alpha_{j}$ is the path-loss exponent. The same value is assumed for radar-related links, while for  UEs located behind the surface, we assume an additional penetration loss equal to $ 15 $ dB \textcolor{black}{\cite{Kammoun2020,Papazafeiropoulos2021}}. The correlation matrices $ \bR_{\mathrm{BS}}$ and $\bR_{\mathrm{R}} $ are  obtained according to Sec. II.C.
	  Note that the transmit power from the BS is $ \rho=0.1~\mathrm{W} $ and the noise variance is $ \sigma^2=-174+10\log_{10}B_{\mathrm{c}} $, where $B_{\mathrm{c}}=200~\mathrm{kHz}$ is the bandwidth. \textcolor{black}{Similar to \cite{He2022}}, regarding the radar, we assume $ 8  $ detection directions, which are $ -\frac{\pi}{3},-\frac{\pi}{4}, \ldots, \frac{\pi}{4} $, and the PRI is $ L=10 $. In addition, the ratio of $ |\al_{z}|^{2}  $ to noise power is $ \frac{|\al_{z}|^{2}}{\sigma^2}=-12~\mathrm{dB}, \forall z $. We assume that the
	total transmit power for $ 8 $ directions is set as $ 10~\mathrm{W} $ and the SINR requirement is $ 10~\mathrm{dB} $ for each direction.
	
	\subsection{Benchmark Schemes}
	To illustrate the performance comparison,  we present the following benchmark schemes.
	
	\begin{itemize}
		\item We consider the conventional RIS that consists of transmitting-only or reflecting-only elements, each with $ N_{t} $ and $ N_{r} $ elements, such that $ N_{t}+N_{r} =N$. It has to be mentioned that this scheme is equivalent to the MS protocol, where the first $ N_{t} $ elements operate in transmission mode and the remaining $ N_{r} $ elements operate in reflection mode. 
		\item We assume random phase shifts, i.e., the phase shifts of reflecting elements are randomly generated by obeying the Uniform
		distribution within $[0, 2\pi)  $.
		\item We consider a conventional system with no RIS.
	\end{itemize}

	\begin{figure}[!h]
		\begin{center}
			\includegraphics[width=0.8\linewidth]{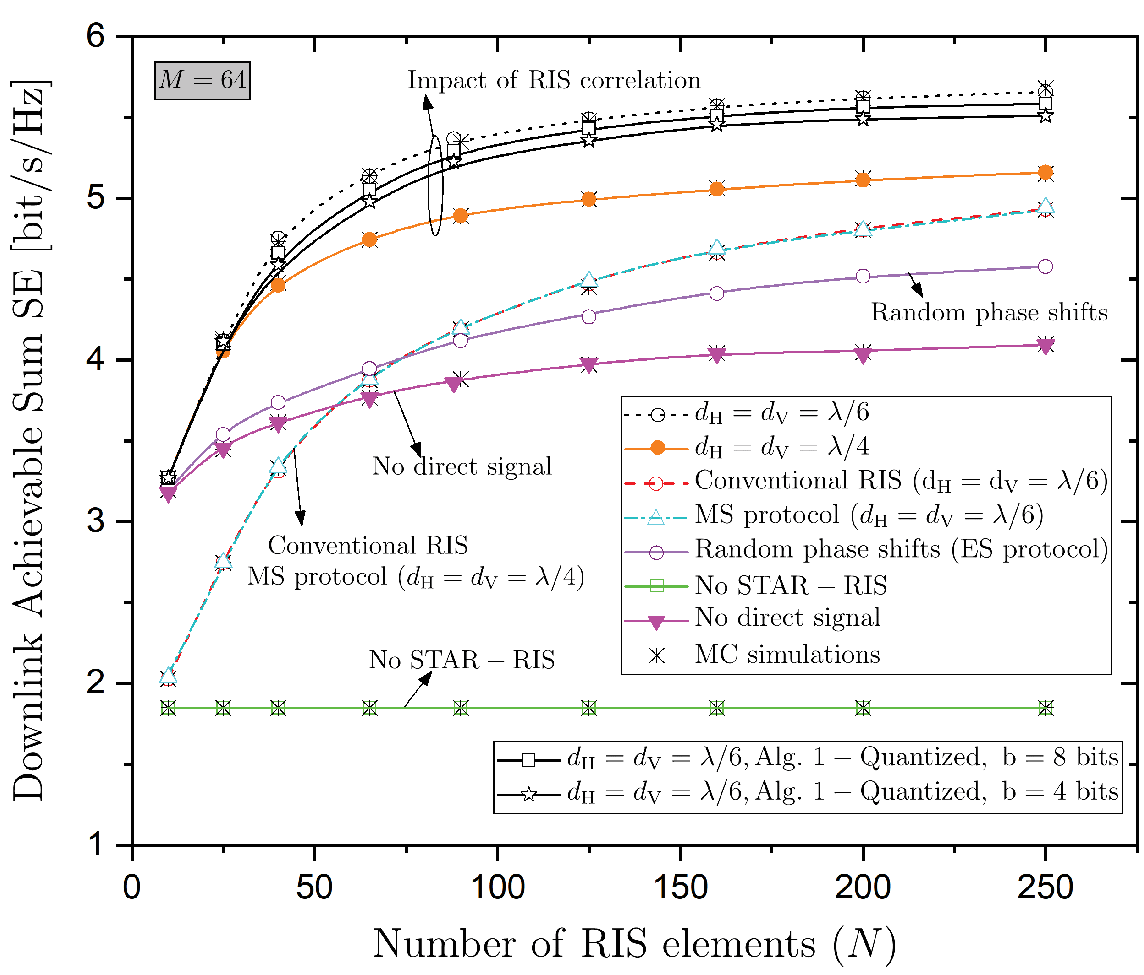}
			\caption{{Downlink achievable communication sum SE versus the number of RIS elements antennas $N$  for varying conditions (Analytical results and MC simulations). }}
			\label{fig2}
		\end{center}
	\end{figure}
	Fig. \ref{fig2} depicts the achievable communication sum SE versus the number of  STAR-RIS elements $ N $. Also, we show the effect of spatial correlation by varying  the size of each RIS element. The first thing that we can easily observe is that  the downlink sum SE increases with $ N $ as expected. In particular, the curves coincide for low $ N $, but the gap increases as  $N$ increases. Regarding the spatial correlation at the STAR-RIS, we observe that  the performance decreases as  the correlation increases. Specifically, as the size of each RIS element decreases, the correlation decreases, and the  sum SE increases. In addition, the MS protocol results in a lower performance since it is a special case of the ES protocol. For the sake of comparison, we have shown the cases of no RIS and blocked direct signal that reveal that the presence of the surface contributes to the system performance since the corresponding lines are lower than the line with a STAR-RIS. Furthermore, we have shown the performance of the conventional RIS, but again the performance is lower because fewer degrees of freedom can be taken into an advantage. \textcolor{black}{Also, the performance of Algorithm 1- Quantized is studied in the case of a STAR-RIS with $4 $-bit and $8 $-bit phase shifters. As can be seen, limiting the surface elements to only discrete phase shifts as opposed to continuous phase shifts decreases the performance reduces the performance. The performance loss is greater for small STAR-RIS resolution, i.e., for $ b=4 $ bits.}

	Fig. \ref{fig3} shows the achievable communication sum SE versus the number of BS antennas $ M $. As can be seen, the sum SE  increases with $ M $. Also, by increasing the size of the RIS elements, we result in an increased surface correlation that leads to worse system performance. If no RIS is considered, no surface optimization can be performed. Similarly, if independent Rayleigh fading is assumed, no surface optimization can be achieved. If we assume random phase shifts, the sum SE is again lower. In addition, the ES protocol results in better performance compared to the MS protocol. However, ES comes with higher complexity. Furthermore,  we have included the baseline scenario with a conventional RIS, where it is shown that the performance is lower compared to the STAR-RIS case.	\textcolor{black}{In addition, for further comparison, we compare the proposed scheme based on statistical CSI with  I-CSI. Despite that I-CSI  results in higher sum-rate, it results in larger overhead. The figure shows the superiority of the I-CSI design. Specifically, we maximize the sum-rate expression using the the algorithm  described in \cite{Peng2021} with the same parameters. 		 We note that that the statistical-CSI based design is preferable because of its lower overhead and because it provides closed-form expressions.}
	
	\begin{figure}[!h]
		\begin{center}
			\includegraphics[width=0.8\linewidth]{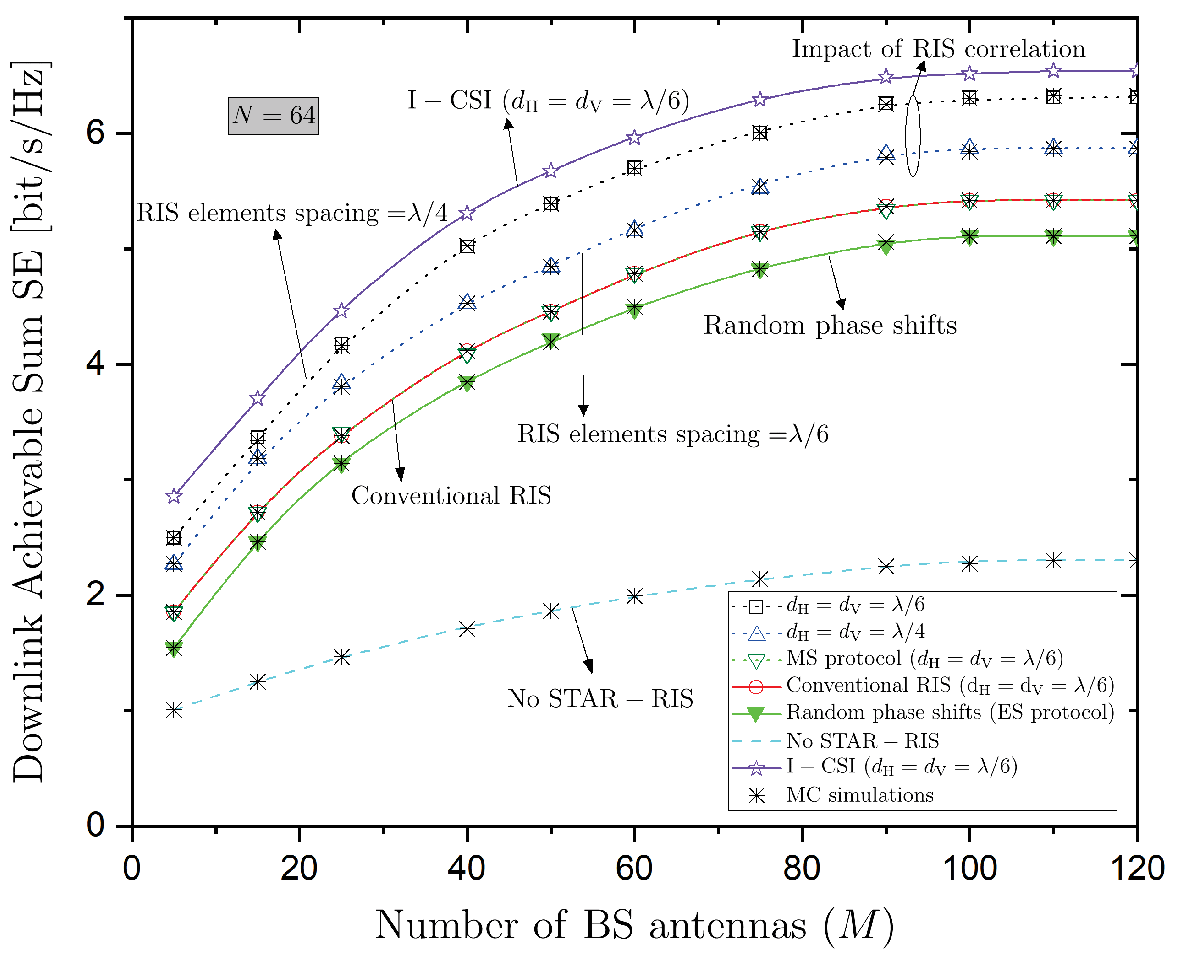}
			\caption{Downlink achievable communication sum SE versus the number of BS antennas $ M $  for varying conditions (Analytical results and MC simulations).}
			\label{fig3}
		\end{center}
	\end{figure}

	\textcolor{black}{In Fig. \ref{fig10}, we depict the achievable communication sum SE versus the number of radar antennas $ Q $ in terms of simulations. We observe that the sum SE increases with $ Q $. Moreover, we observe how more sophisticated precoders and decoders perform. Specifically, the application of RZF, MMSE results in higher performance than MRT, MRC as expected. Also, we have plotted the cases of $ \lambda/4 $ and  $ \lambda/2 $, where the former exhibits better performance.}
	
	\begin{figure}[!h]
		\begin{center}
			\includegraphics[width=0.8\linewidth]{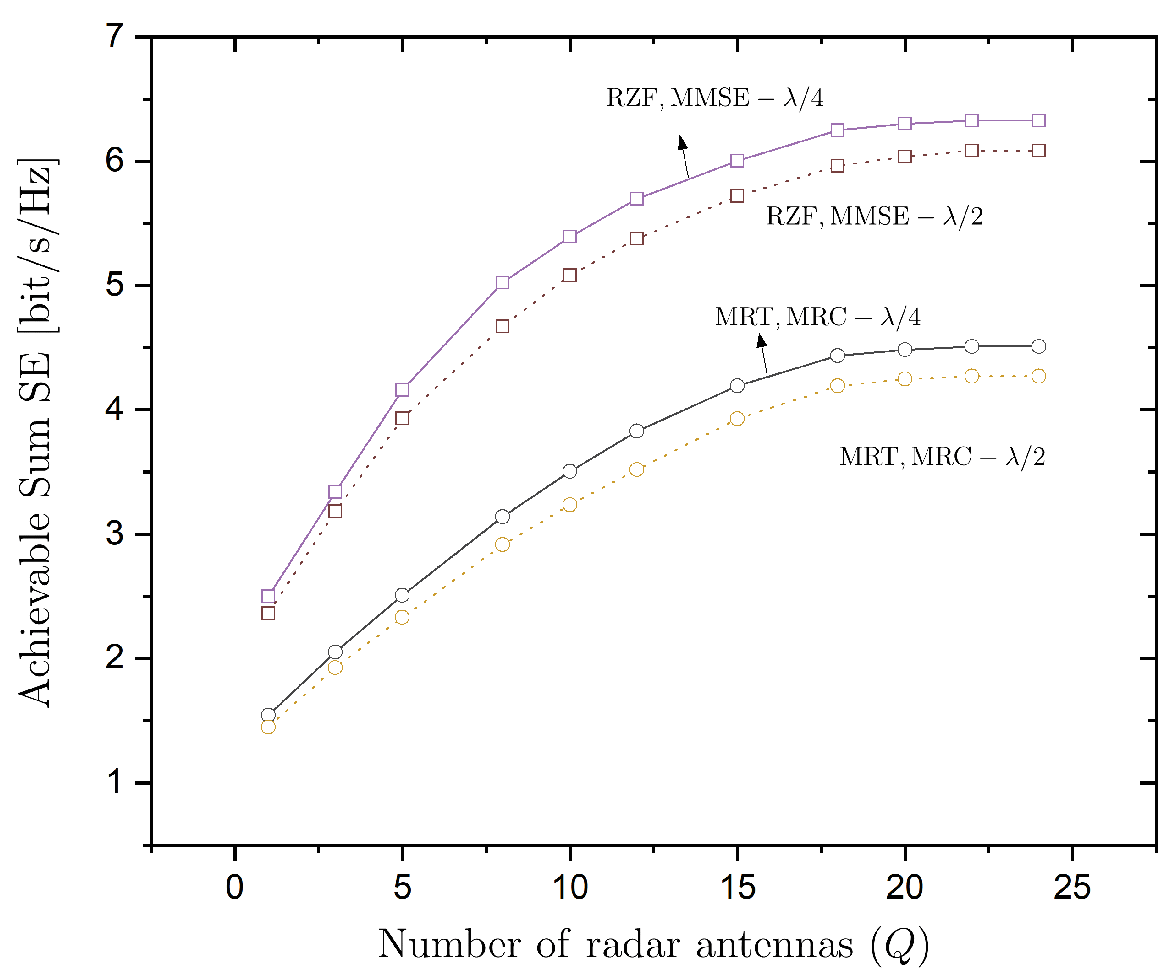}
			\caption{\textcolor{black}{Downlink achievable communication sum SE versus the number of radar antennas $Q$  for different precoders/decoders and element spacing (MC simulations).} }
			\label{fig10}
		\end{center}
	\end{figure}

	Fig. \ref{fig4} illustrates the achievable communication sum SE versus the SNR for $N=200$   (solid lines) and $N=40$ (dotted lines) elements. When $N=40$, the performance is worse. Also, for low SNR, ES and MS protocols present a small gap.  For high SNR, the  gap increases with increasing SNR. In the case of a conventional RIS, the performance is lower than STAR-RIS. In a similar way, the case of  independent Rayleigh fading appears a lower performance. \textcolor{black}{ The reasons for these observations can be justified as follows.  At low SNR, it is more beneficial to focus on UEs in the reflection region as they are closer to the BS. In particular, this is confirmed by the fact that, after running the proposed algorithm, we observe $\beta_{n}^{r}\approx 1,\forall n\in \mathcal{N}$. As a result, the performances of the ES and MS protocols, as well as the conventional RIS, are almost the same. However, as SNR increases, the increase in the sum SE becomes minimal if we continue to focus on UEs in the reflection region. Thus, at high SNR, directing some power to UEs in the transmission region can improve the total SE. This leads to performance differences between the ES and MS protocols, as well as the conventional RIS.}
	
	\begin{figure}[!h]
		\begin{center}
			\includegraphics[width=0.8\linewidth]{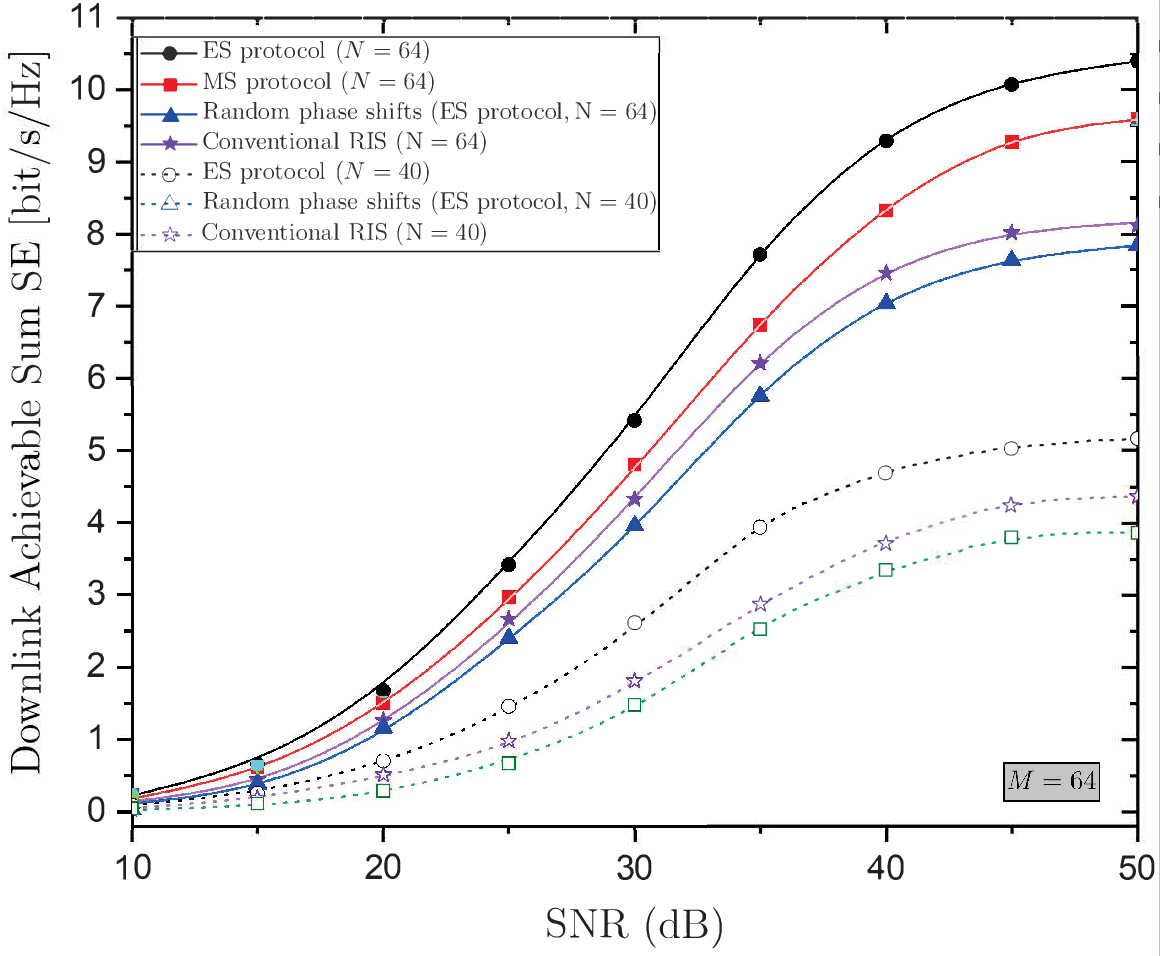}
			\caption{{Downlink achievable communication sum SE versus the SNR  for varying conditions. }}
			\label{fig4}
		\end{center}
	\end{figure}
	
	In Fig. \ref{fig6}, we depict the achievable communication sum SE versus the radar SINR requirement, which is $  \gamma^{r} $. It is shown that the SE is almost constant for low values of $  \gamma^{r} $, while it decreases with $  \gamma^{r} $ when it is high. The reason is that if $  \gamma^{r} $ is low, the radar transmit power is sufficient and the interference can be mitigated. When $ \gamma^{r} $ increases, a higher percentage of the radar transmit power is used to keep the radar SINR at a given level, and less power  is used to mitigate the interference from the radar to UE $ k $. As a result, the communication SE is decreased. For the sake of comparison, in the same figure, we have shown the scenarios of conventional RIS, random phase shifts, and no direct signal.
	
	\begin{figure}[!h]
		\begin{center}
			\includegraphics[width=0.8\linewidth]{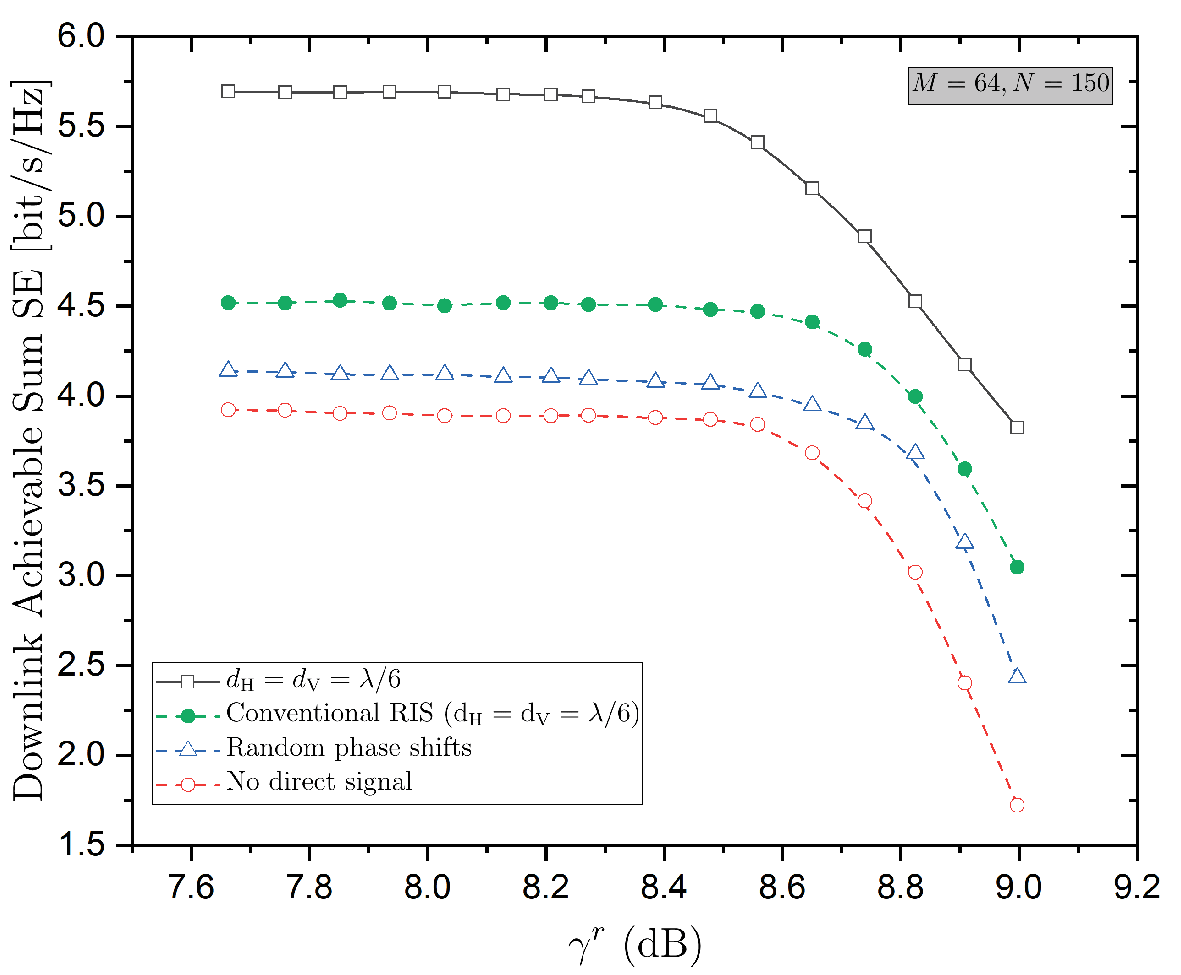}
			\caption{{Downlink achievable communication sum SE versus the radar SINR requirement for varying conditions. }}
			\label{fig6}
		\end{center}
	\end{figure}
	
	In Fig. \ref{fig5}, we show the convergence of the proposed projected gradient algorithm by plotting the achievable communication sum SE versus the iteration count returned by Algorithm \ref{Algoa1} for 5 different randomly generated initial points. Note that the initial values for $\betv_{r}$ and $\betv_{t}$ are $\sqrt{0.5}$, which means equal power splitting between the transmission and reception modes. Also, the initial values for $\thetv_{r}$ and  $\thetv_{r}$ are obtained from the Uniform distribution over $[0,2\pi)$. When the  increase of the objective between the two last iterations is less than $10^{-5}$ or the number of iterations is larger than $200$, we  terminate Algorithm \ref{Algoa1}. In particular, since the proposed  projected gradient algorithm  provides the solution to a nonconvex problem, different initial points converge to different points  as shown in the figure. To reduce this performance sensitivity of Algorithm  \ref{Algoa1} with respect to the initial points, we run it from different initial points and select the best convergent solutions. Simulations have shown that a balanced trade-off between    the  sum SE and complexity is obtained by running Algorithm  \ref{Algoa1} from 5 randomly generated initial points.
	
	\begin{figure}[!h]
		\begin{center}
			\includegraphics[width=0.8\linewidth]{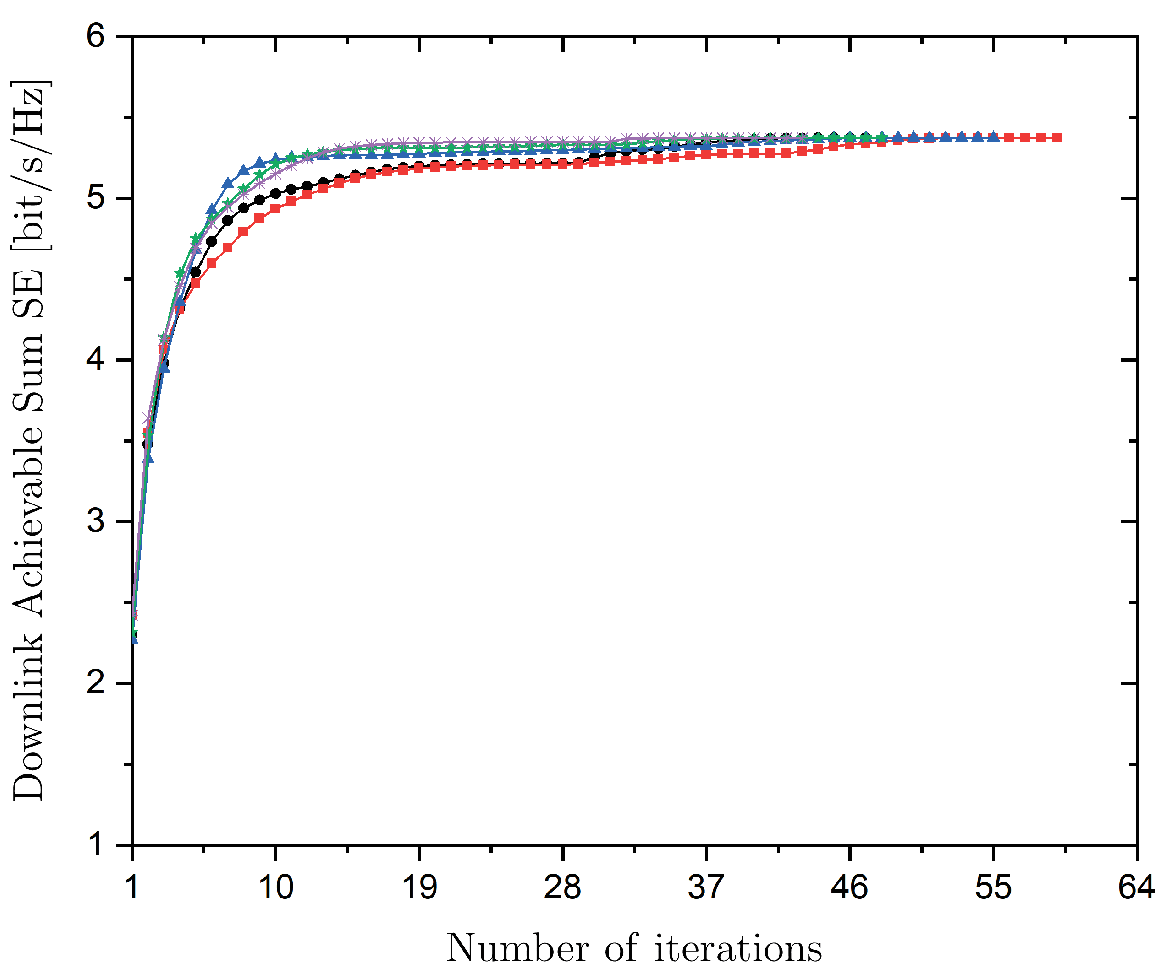}
			\caption{Convergence of Algorithm \ref{Algoa1} for a STAR-RIS-assisted communication radar coexistence system   ($M=64$, $ N=100 $, $Q=12$, $K=4 $). }
			\label{fig5}
		\end{center} 
	\end{figure}

	\section{Conclusion} \label{Conclusion} 
	In this work, we studied a STAR-RIS-assisted radar coexistence system to suppress the mutual interference while boosting the communication performance and providing full space coverage. Note that a STAR-RIS is more robust than a conventional reflecting-only RIS in terms of both performance and coverage. Specifically, we formulated a communication SINR maximization problem subject to radar SINR  and  radar transmit power constraints. Based on statistical CSI and by developing an AO optimization algorithm, we optimized both the STAR-RIS and the radar beamforming. In particular, the amplitudes and phase shifts of the surface were optimized simultaneously, which reduces the overhead. This management is crucial in STAR-RIS-assisted systems, which require the double number of variables compared to conventional RIS. Numerical results investigated the impact of the proposed architecture by various parameters, and its impact on benchmark schemes.  \textcolor{black}{As a benchmark, we devised a design based on  I-CSI, and we showed that the proposed statistical CSI based scheme is inferior in terms of the sum rate but superior in terms of required overhead.}

	\begin{appendices}
		\section{Proof of Theorem~\ref{Theorem1}}\label{Theor1}
		The proof requires the derivation of the first in the denominator of \eqref{SINR1}. Specifically, we have
		\begin{align}
&			\EE \{|(\bH_{BR}^{\H}+\bH_{SR}^{\H}\bPhi_{w_{k}}^{\H}\bH_{BS}^{\H}) \sum_{i=1}^{K}\bff_{i}|^{2}\}=\EE \{|\bH_{BR}^{\H} \sum_{i=1}^{K}\bff_{i}|^{2}\}\nn\\
&+\EE \{|\bH_{SR}^{\H}\bPhi_{w_{k}}^{\H}\bH_{BS}^{\H} \sum_{i=1}^{K}\bff_{i}|^{2}\},	\label{Theor11}		
		\end{align}
		where we have applied the property $\EE\left\{ |X+Y|^{2}\right\} =\EE\left\{ |X|^{2}\right\} +\EE\left\{ |Y^{2}|\right\}$, which is valid for any two uncorrelated random variables while one of them has zero mean value.
		The first term of \eqref{Theor11} is obtained as
		\begin{align}
&			\EE \{|\bH_{BR}^{\H} \sum_{i=1}^{K}\bff_{i}|^{2}\}
			\nn\\
			&= \sum_{i=1}^{K}\tilde{ \beta}_{BR}\	\EE \{\bR_{\mathrm{R}}^{1/2}\bZ_{BR}^{\H}\bR_{\mathrm{BS}}^{1/2}\bR_{BSi}\bR_{\mathrm{BS}}^{1/2}\bZ_{BR}\bR_{\mathrm{R}}^{1/2}\}\\
			&= \tilde{ \beta}_{BR}\sum_{i=1}^{K}\tr(\bR_{BSi}\bR_{BS})\bR_{\mathrm{R}},\label{term11}
		\end{align}
		where we first have obtained the conditional expectation with respect to $ \bH_{BSi} $. Next, we used the property $ \EE\{\bV \bU\bV^{\H}\} =\tr (\bU) \Id_{M}$ with $\bU  $ being a deterministic square matrix, and $ \bV $ being any matrix with independent and identically distributed (i.i.d.) entries of zero mean and unit variance.
		Similarly, the second term of \eqref{Theor11} can be written as
		\begin{align}
		&	\EE \{|\bH_{SR}^{\H}\bPhi_{w_{k}}^{\H}\bH_{BS}^{\H} \sum_{i=1}^{K}\bff_{i}|^{2}\}\nn\\
			&=\sum_{i=1}^{K}\EE \{|\bH_{SR}^{\H}\bPhi_{w_{k}}^{\H}\bH_{BS}^{\H}\bR_{BSi}\bH_{BS}\bPhi_{w_{k}}\bH_{SR}\},\\
			&=\tilde{ \beta}_{BS}\sum_{i=1}^{K}\EE \{|\tr(\bR_{BSi}\bR_{\mathrm{BS}}) \bH_{SR}^{\H}\bPhi_{w_{k}}^{\H}\bR_{\mathrm{RIS}}\bPhi_{w_{k}}\bH_{SR}\},\\
			&=\tilde{ \beta}_{BSR}\sum_{i=1}^{K}\tr(\bR_{BSi}\bR_{\mathrm{BS}})\tr(\bPhi_{w_{k}}^{\H}\bR_{\mathrm{RIS}}\bPhi_{w_{k}}\bR_{\mathrm{RIS}})\bR_{\mathrm{R}},\label{term21}
		\end{align}
		where $\tilde{ \beta}_{SR}= \tilde{ \beta}_{BSR}\tilde{ \beta}_{BS} $.
		The normalization parameter is obtained as
		\begin{align}
			\!\!\!	\bar{\lambda}&=\frac{1}{\sum_{i=1}^{K}\!\EE\{\mathbf{f}_{i}^{\H}\mathbf{f}_{i}\}}=\frac{1}{\sum_{i=1}^{K}\!\mathbb{E}\{\mathbf{h}_{BSi}^{\H}\mathbf{h}_{BSi}\}}\nn\\
			&=\frac{1}{\sum_{i=1}^{K}\!\tr(\bR_{BSi})}.\label{normalization}
		\end{align}
		By substituting \eqref{term11}, \eqref{term21}, and \eqref{normalization} in \eqref{SINR1}, we conclude the proof.
		
		\section{Proof of Theorem~\ref{Theorem2}}\label{Theor2}
		The derivation of 	\eqref{SINR2} starts by focusing on its numerator, which can be written as
		\begin{align}
			|\EE\{\bh_{BSk}^{\H}\bff_{k}\}|^{2}&=|\tr\big( \EE\{\bh_{BSk} \bh_{BSk}^{\H} \} \!\big)|^{2} \label{term02}\\
			&=|\tr\left(\bR_{BSk}\right)\!|^{2}\label{term2},
		\end{align}
		where, in\eqref{term02}, we have applied  the property  $\bx^{\H}\by = \tr(\by \bx^{\H})$, while the last  equation is derived after computing the expectation.
		
		Next, the first term in the denominator of \eqref{SINR2} is obtained as
		\begin{align}
		&	\EE\big\{ \big|\bh_{BSk}^{\H}\bff_{k}-		\EE\{\bh_{BSk}^{\H}\bff_{k}\}\big|^{2}\big\}\!
			\nn\\
			&=\!
			\EE\big\{ \big| \bh_{BSk}^{\H}\bh_{BSk}\big|^{2}\big\}\!-\!\big|\EE\big\{
			\bh_{BSk}^{\H}\bh_{BSk}\big\}\big|^{2}\label{est2}\\
			&=\tr(\bR_{BSk}^{2})-\tr^{2}(\bR_{BSk}) \label{est3},
		\end{align}
		where in~\eqref{est2}, we have applied the property $\EE\left\{ |X+Y|^{2}\right\} =\EE\left\{ |X|^{2}\right\} +\EE\left\{ |Y^{2}|\right\}$. 
		
		The second term of the denominator in \eqref{SINR2} is written as
		\begin{align}
			&\EE\big\{ \big| \bh_{BSk}^{\H}\bff_{i}\big|^{2}\big\}=\tr\!\left(\bR_{BSk}\bR_{BSi} \right)\label{54}
		\end{align}
		because of the   uncorrelation between $ \bh_{BSk} $ and $ \bh_{BSi} $. 
		
		The third term is obtained as
		\begin{align}
			\EE\{|\bh_{RSk}^{\H}\bar{\bu}_{z}|^{2}\}=\tr(\bar{\bu}_{z}\bar{\bu}_{z}^{\H}\bR_{RSk})\label{55}.
		\end{align}

		Finally, we substitute \eqref{term2},  \eqref{est3},  \eqref{54},  \eqref{55}, and \eqref{normalization} into \eqref{SINR2}.

		\section{Proof of Proposition~\ref{LemmaGradients}}\label{lem2}
		First, we focus on the derivation of $\nabla_{\boldsymbol{\theta^{t}}}	\mathrm{SE} $, which is the complex gradient of the achievable sum SE with respect to $ \boldsymbol{\theta}^{t\ast}$.
		Based on \eqref{SINR4}, the derivation of \eqref{LowerBound} gives
		\begin{equation}
			\nabla_{\thetv^{t}}	\mathrm{SE}=\frac{1}{\tau_{c}}\sum_{k=1}^{K}\frac{I_{k}\nabla_{\boldsymbol{\theta}^{t}}S_{k}-S_{k}\nabla_{\boldsymbol{\theta}^{t}}I_{k}}{(1+\gamma_{k})I_{k}^{2}}.
		\end{equation}
		The computation of $\nabla_{\thetv^{t}}S_k$ for a given user $k$ takes into account that $\nabla_{\thetv^{t}}S_k = 0$ if $w_k=r$ as can be seen by \eqref{Sk} and the expression of $ \bR_{BSk} $. Hence, we aim to to find $\nabla_{\thetv^{t}}S_k$ only when $w_k=t$.
		
		\begin{align}
			d(S_{k})&=d\bigl(\tr(\bR_{BSk})^{2}\bigr)\nn\\
			&=2\tr(\bR_{BSk}d(\tr(\bR_{BSk})))\nn\\
			&=2\tr(\bR_{BSk})\tr(d(\bR_{BSk})).\label{eq:dSk}
		\end{align}
		
		The differential $d(\bR_{BSk})$ is derived as follows
		\begin{align}
			d(\bR_{BSk})
			& =\tilde{ \beta}_{BSk}\mathbf{R}_{\mathrm{BS}}\tr\bigl(\mathbf{A}_{t}\herm d(\bPhi_{t})+\mathbf{A}_{t}d\bigl(\bPhi_{t}\herm\bigr)\bigr).\label{dbsk}
		\end{align}
		By accounting that $\bPhi_{t}$ is diagonal, \eqref{dbsk} can be written as 
		\begin{align}
			d(\bR_{BSk})&=\tilde{ \beta}_{BSk}\mathbf{R}_{\mathrm{BS}}\bigl(\bigl(\diag\bigl(\mathbf{A}_{t}\herm\diag(\boldsymbol{{\beta}}^{t})\bigr)\bigr)\trans d(\boldsymbol{\theta}^{t})\nn\\
			&+\bigl(\diag\bigl(\mathbf{A}_{t}\diag(\boldsymbol{{\beta}}^{t})\bigr)\bigr)\trans d(\boldsymbol{\theta}^{t\ast})\bigr).\label{eq:dRk}
		\end{align}
		
		Substitution of \eqref{eq:dRk} into \eqref{eq:dSk} gives
		\begin{align}
			d(S_{k})	&=v_{k}\bigl(\bigl(\diag\bigl(\mathbf{A}_{t}\herm\diag(\boldsymbol{{\beta}}^{t})\bigr)\bigr)\trans d(\boldsymbol{\theta}^{t})\nn\\
			&+\bigl(\diag\bigl(\mathbf{A}_{t}\diag(\boldsymbol{{\beta}}^{t})\bigr)\bigr)\trans d(\boldsymbol{\theta}^{t\ast})\bigr),\label{eq:dSk:final}
		\end{align}
		where 
		\begin{align}
			v_{k}=2\tilde{ \beta}_{BSk}\tr(\bR_{BSk})\tr(\mathbf{R}_{\mathrm{BS}}).
		\end{align}
		Hence, from \eqref{eq:dSk:final}, we obtain that 
		\begin{align}
			\nabla_{\thetv^{t}}S_k&=\frac{\partial}{\partial{\thetv^{t\ast}}}S_{k}\nn\\
			&=v_{k}\diag\bigl(\mathbf{A}_{t}\diag(\boldsymbol{{\beta}}^{t})\bigr)
		\end{align}
		for $w_k=t$, which gives \eqref{derivtheta_t}. Similarly, we can prove \eqref{derivtheta_r}, but its proof is omitted for the sake of brevity.
		
		Regarding the differential $ 	d(I_{k}) $, we obtain from \eqref{Ik} that
		\begin{align}
			d(I_{k})&=2\tr(\bR_{BSk})d(\bR_{BSk})(\bR_{BSk})\tr(d(\bR_{BSk}))\nn\\
			&+	\!\!	\sum_{ i \in K_{t}/k} \!\!\tr\!\left(d(\bR_{BSk})\bR_{BSi} +\bR_{BSk}d(\bR_{BSi})\right)\nn\\
			&-\frac{K}{\bar{\lambda}^{2} \rho}(\sum_{i=1}^{K_{t}}\!\tr(d(\bR_{BSi})))(\tr(\bar{\bu}_{z}\bar{\bu}_{z}^{\H}\bR_{RSk})+\sigma_{c}^{2})\nn\\
			&+\frac{K}{\bar{\lambda} \rho ZL}\sum_{z=1}^{Z}\tr(\bar{\bu}_{z}\bar{\bu}_{z}^{\H}d(\bR_{RSk})), 
		\end{align}
		where the differentials equal to zero if $w_{k}\neq t$. Also, we have that $ d(\bR_{BSk}) $ has been obtained in \eqref{dbsk}, while $ d(\bR_{RSk}) $ is obtained similar to \eqref{eq:dRk} as
		\begin{align}
			d(\bR_{RSk})&=\tilde{ \beta}_{RSk} \bR_{\mathrm{R}}\bigl(\bigl(\diag\bigl(\mathbf{A}_{t}\herm\diag(\boldsymbol{{\beta}}^{t})\bigr)\bigr)\trans d(\boldsymbol{\theta}^{t})\nn\\
			&+\bigl(\diag\bigl(\mathbf{A}_{t}\diag(\boldsymbol{{\beta}}^{t})\bigr)\bigr)\trans d(\boldsymbol{\theta}^{t\ast})\bigr).
		\end{align}
		Thus, we can obtain $\nabla_{\thetv^{t}}I_k$ as
		\begin{align}
			\nabla_{\thetv^{t}}I_k=\frac{\partial}{\partial\boldsymbol{\theta}^{t\ast}}I_{k} & =\diag\bigl(\tilde{\mathbf{A}}_{kt}\diag(\boldsymbol{\mathbf{\beta}}^{t})\bigr),
		\end{align}
		where
		\begin{equation}
			\tilde{\mathbf{A}}_{kt}=\begin{cases}
				\bar{\nu}_{k}\mathbf{A}_{t}+\sum\nolimits _{i\in\mathcal{K}_{t}}^{K}\tilde{\nu}_{ki}\mathbf{A}_{t} & w_{k}=t\\
				\sum\nolimits _{i\in\mathcal{K}_{t}}\tilde{\nu}_{ki}\mathbf{A}_{t} & w_{k}\neq t,
			\end{cases}
		\end{equation}
		with $\bar{\nu}_{k}=2\tilde{ \beta}_{BSk}\tr(\bR_{BSk}\mathbf{R}_{\mathrm{BS}})-2\tilde{ \beta}_{BSk}\tr(\bR_{BSk})\tr(\mathbf{R}_{\mathrm{BS}})+\frac{K}{\bar{\lambda} \rho ZL}\tilde{ \beta}_{RSk}\sum_{z=1}^{Z}\tr(\bar{\bu}_{z}\bar{\bu}_{z}^{\H}\bR_{\mathrm{R}})$, 
		$\tilde{\nu}_{ki}=\tilde{ \beta}_{BSk}\tr\bigl(\mathbf{R}_{\mathrm{BS}}(\bR_{BSi}+\bR_{BSk})\bigr)-\frac{K}{\bar{\lambda}^{2} \rho}\tilde{ \beta}_{BSk}\tr(\mathbf{R}_{\mathrm{BS}})(\frac{1}{ZL}\sum_{z=1}^{Z}\tr(\bar{\bu}_{z}\bar{\bu}_{z}^{\H}\bR_{RSk})+\sigma_{c}^{2})$.
		
		The gradients  $\nabla_{\boldsymbol{\beta}^{u}}S_{k}$ and $\nabla_{\boldsymbol{\beta}^{u}}I_k$ with $ u=\{t,r\} $ are obtained similarly by noticing that
		\begin{subequations}\label{dRk_beta_t1}
			\begin{align}
				d(\mathbf{R}_{k}) & =\tilde{ \beta}_{BSk}\mathbf{R}_{\mathrm{BS}}\tr\bigl(\mathbf{A}_{t}\herm d(\bPhi_{t})+\mathbf{A}_{t}d\bigl(\bPhi_{t}\herm\bigr)\bigr)\\
				& =\tilde{ \beta}_{BSk}\mathbf{R}_{\mathrm{BS}}\bigl(\diag\bigl(\mathbf{A}_{t}\herm\diag(\btheta^{t})\bigr)^{\T}d(\boldsymbol{\beta}^{t})\nn\\
				&+\diag\bigl(\mathbf{A}_{t}\diag(\btheta^{t\ast})\bigr)^{\T}d(\boldsymbol{\beta}^{t})\bigr)\\
				& =2\tilde{ \beta}_{BSk}\mathbf{R}_{\mathrm{BS}}\Re\bigl\{\diag\bigl(\mathbf{A}_{t}\herm\diag(\btheta^{t})\bigr\}^{\T} d(\boldsymbol{\beta}^{t}).
			\end{align}
		\end{subequations}
		
		\section{Proof of Proposition~\ref{Proposition1}}\label{Prop1}
		By noticing that $ \bw_{z} $ is present only in the constraint \eqref{Maximization2313},
		the optimal solution is obtained by means of the Rayleigh quotient maximization \cite{Datta2010},
		as
		\begin{align}
			\bw_{z}^{\star}&=(\sigma^{2}_{r}\Id_{M}+\bA)^{-1} \al_{z}\ba(\bar{\theta}_{z})\ba^{\T}(\bar{\theta}_{z} ) \bar{\bu}_{z},
		\end{align}
		where we have defined $ \bA=\bR_{\mathrm{R}}\sum_{i=1}^{K}\tr(\bR_{BSi}\bR_{BS})(\tilde{ \beta}_{BR}+\tilde{ \beta}_{BSR}\tr(\bPhi_{w_{k}}^{\H}\bR_{\mathrm{RIS}}\bPhi_{w_{k}}\bR_{\mathrm{RIS}})) $.

		Having found the optimal $ \bw_{z} $, the problem $ 	(\mathcal{P}3) $ can be reformulated as
		\begin{subequations}
			\begin{align}
				(\mathcal{P}4)~~&\max_{\bu_{k}} 	\;		\sum_{z=1}^{Z}\tr(\bar{\bu}_{z}\bar{\bu}_{z}^{\H}\bR_{RSk})
				\label{Maximization4} \\
				&~~	\mathrm{s.t}~~\;\!	|\ba^{\T}(\bar{\theta}_{z} ) \bar{\bu}_{z}|^{2}\ge \bar{\gamma}^{r}	\label{Maximization24} \\
				&\;~~~~~~\!\sum_{z=1}^{Z}\|\bar{\bu}_{z}\|^{2}\le P_{\mathrm{max}},\label{Maximization214}	
			\end{align}
		\end{subequations}
		where the constraint \eqref{Maximization24} is obtained by rewriting \eqref{Maximization2313} for optimal $ \bw_{z} $ as
		\begin{align}
			&\!\!\!\!|\al_{z}|^{2}\bar{\bu}_{z}^{\H}\ba^{*}(\bar{\theta}_{z})\ba(\bar{\theta}_{z})^{\H} (\sigma^{2}_{r}\Id_{M}+\bA)^{-1} \ba(\bar{\theta}_{z})\ba^{\T}(\bar{\theta}_{z} ) \bar{\bu}_{z}\nn\\
			&\!\!\!\!= |\al_{z}|^{2}(\ba(\bar{\theta}_{z})^{\H} (\sigma^{2}_{r}\Id_{M}+\bA)^{-1} 	 \ba(\bar{\theta}_{z}))|\ba^{\T}(\bar{\theta}_{z} ) \bar{\bu}_{z}|^{2}\ge \gamma^{r}.
		\end{align}
		Meanwhile, we define
		\begin{align}
			\bar{\gamma}^{r}&=\frac{\gamma^{r}}{|\al_{z}|^{2}}(\ba(\bar{\theta}_{z})^{\H} (\sigma^{2}_{r}\Id_{M}+\bA)^{-1} \ba(\bar{\theta}_{z}))^{-1}\\
			&=\frac{\gamma^{r}}{|\al_{z}|^{2}}\Big(1-\frac{\ba(\bar{\theta}_{z})^{\H}\bA \ba(\bar{\theta}_{z})}{\sigma^{2}_{r}+\bA}\Big)^{-1},
		\end{align}
		where  the last equation is obtained with the help of the Sherman-Morrison formula, which gives \eqref{Maximization24}.
		
		The next step includes writing $ \bar{\bu}_{z} $ as a linear combination of $ \ba(\bar{\theta}_{z}) $, $ \bee_{z} $, and $ \br_{z} $, i.e., we have
		\begin{align}
			\bar{\bu}_{z}=\eta_{z,1}\ba^{*}(\bar{\theta}_{z})+\eta_{z,2}\bee_{z}+\br_{z},\label{uvector}
		\end{align}
		where $ \br_{z} $ is orthogonal to $ \ba^{*}(\bar{\theta}_{z})  $ and $ \bee_{z} $, i.e., we have $ \br_{z}^{\H}\ba^{*}(\bar{\theta}_{z})=0 $ and  $ \br_{z}^{\H}\bee_{z}=0 $. Now, problem $ (\mathcal{P}4) $ can be simplified to
		\begin{subequations}
			\begin{align}
				(\mathcal{P}5)~&\max_{\bar{\bu}_{z}} 	\;	\!\!	\sum_{z=1}^{Z}\tr(\eta_{z,1}^{2}\ba^{\T}(\bar{\theta}_{z})\bR_{RSk}\ba^{*}(\bar{\theta}_{z})+\eta_{z,2}^{2}\bee_{z}^{\H}\bR_{RSk}\bee_{z})
				\label{Maximization5} \\
				&~~	\mathrm{s.t}~~\;\!	|\eta_{z,1}|^{2}\ge \bar{\gamma}^{r}	\label{Maximization25} \\
				&\;~~~~~~\!\sum_{k=1}^{K}(|\eta_{z,1}|^{2}+|\eta_{z,2}|^{2})\le P_{\mathrm{max}}.\label{Maximization215}	
			\end{align}
		\end{subequations}
		After application of the method of Lagrange multiplier and KKT conditions, we result in the following optimal solution to problem $ (\mathcal{P}5) $
		\begin{align}
			\eta_{z,1}^{\star}&=\sqrt{\bar{\gamma}^{r}},\\
			\eta_{z,2}^{\star}&=\left\{
			\begin{array}{ll}
				-\frac{\ba^{\T}(\bar{\theta}_{z})\bR_{RSk}\ba^{*}(\bar{\theta}_{z})}{\bee_{z}^{\H}\bR_{RSk}\bee_{z}+\bar{\lambda}^{\star}}\sqrt{ \bar{\gamma}^{r}},&  \mathrm{if}~\bee_{z}^{\H}\bR_{RSk}\bee_{z}\ne 0,\\
				0, & \mathrm{if}~\bee_{z}^{\H}\bR_{RSk}\bee_{z}=0,
			\end{array} 
			\right. 
		\end{align}
		where $ \bar{\lambda}^{\star}\ge 0 $ is the  optimal Lagrange multiplier for constraint \eqref{Maximization215}, which satisfies the following condition
		\begin{align}
			\bar{\lambda}^{\star}(\sum_{z=1}^{Z}(|\eta^{\star}_{z,1}|^{2}+|\eta_{z,2}^{\star}|^{2})- P_{\mathrm{max}})=0.
		\end{align}
		Regarding $ \bee_{z} $, it is obtained via the Gram-Schmidt orthogonalization.
		
		\section{Proof of Proposition~\ref{Proposition2}}\label{Prop2}
		We  observe that the interference  from the radar to each UE, given by \eqref{Maximization4} can be written as
		\begin{align}
			&\sum_{z=1}^{Z}\tr(\bar{\bu}_{z}\bar{\bu}_{z}^{\H}\bR_{RSk})=\sum_{z=1}^{Z}\tr(|\eta_{z,1}\ba^{*}(\bar{\theta}_{z})+\eta_{z,2}\bee_{z}|^{2}\bR_{RSk})\\
			&=	\sum_{z=1}^{Z}\bar{\gamma}^{r}\ba^{T}(\bar{\theta}_{z})\bR_{RSk}\ba^{*}(\bar{\theta}_{z})\Big|1-\frac{\bee_{z}^{\H}\bR_{RSk}\bee_{z}}{\bee_{z}^{\H}\bR_{RSk}\bee_{z}+\bar{\lambda}^{\star}}\Big|^{2},
			\label{Maximization6}
		\end{align}
		where we have used \eqref{uvector}. In \eqref{Maximization6}, it is shown that the interference increases with $ \bar{\lambda}^{\star} $.
		Also, the complementary  slackness condition for $ \bar{\lambda}^{\star} $ can be written in another way as in \eqref{Maximization7}.
		\begin{figure*}
					\begin{align}
			\bar{\lambda}^{\star}\!\!\left(\sum_{z=1}^{Z}\!\bar{\gamma}^{r}\!\!\left(\!\!1\!+\!\frac{\ba^{\T}(\bar{\theta}_{z})\bR_{RSk}\ba^{*}(\bar{\theta}_{z})\bee_{z}^{\H}\bR_{RSk}\bee_{z}\!+\!\bar{\lambda}^{\star}}{\left(\bee_{z}^{\H}\bR_{RSk}\bee_{z}+\bar{\lambda}^{\star}\right)^{2}}\!\!\right)\!-\! P_{\mathrm{max}}	\!\!\right)\!	=\!0	\label{Maximization7}.
		\end{align}
		\hrulefill
	\end{figure*}

		According to \eqref{Maximization7}, we observe that an increase in $ P_{\mathrm{max}} $  requires a  decrease in $ \bar{\lambda}^{\star} $ to satisfy this slackness condition. Especially, when $ P_{\mathrm{max}} $ becomes very large, $ \bar{\lambda}^{\star} $ goes to zero, which means that the interference from the radar to each UE tends to zero with $ P_{\mathrm{max}} $. Hence, we obtain the relationship between the radar power budget and the communication SINR as given by Proposition \ref{Proposition2}.

	\end{appendices}
	
	\bibliographystyle{IEEEtran}
	
	\bibliography{IEEEabrv,mybib}

\end{document}